\newcommand{\BibTeX}{\rm B\kern-.05em{\sc i\kern-.025em b}\kern-.08em\TeX}
\newtheorem{Corollary}[theorem]{Corollary}
\theoremstyle{definition}
\newtheorem{Definition}[theorem]{Definition}
\newtheorem{Example}[theorem]{Example}
\newtheorem{Remark}[theorem]{Remark}
\newtheorem{Open}[theorem]{Open Question}
\newcommand{\no}[1]{\textcolor{black}{#1}}
\newcommand{\noga}[1]{\no{(Noga says: #1)}}
\newcommand{\fullversion}[1]{}
\newcommand{\price}{\mathbf{p}}
\newcounter{phase}[algorithm]
\newlength{\phaserulewidth}
\title{Fair Allocation of Improvements: 
When Old Endowments Shape New Assignments}
\title{Fair Allocation of Improvements:\\ When Old Endowments Shape New Assignments}
\author[1]{Noga Klein Elmalem}
\author[2]{Rica Gonen}
\author[3]{Erel Segal-Halevi}
\affil[1]{The Open University of Israel\\noga486@gmail.com}
\affil[2]{The Open University of Israel\\ricagonen@gmail.com}
\affil[3]{Ariel University, Israel\\erelsgl@gmail.com}
\begin{document}
\maketitle
\begin{abstract}
This work is motivated by a common urban renewal process called Reconstruct and Divide. It involves the demolition of old buildings and the construction of new ones. Original homeowners are compensated with upgraded apartments, while surplus units are sold for profit, so theoretically it is a win-win project for all parties involved.
However, many Reconstruct and Divide projects are withheld or delayed due to disagreements over the assignment of new apartments, claiming they are not fair. The goal of this research is to develop algorithms for envy-free assignment of the new apartments, possibly using monetary payments to reduce envy.

In contrast to previous works on envy-free assignment,  in our setting the envy depends also on the value of the old apartments, as people with more valuable old apartments expect to get more valuable new apartments. This presents two challenges.

First, in some cases, no assignment and payment-vector satisfy the common fairness notions of envy-freeness and proportionality.
Hence, we focus on minimizing the envy and the disproportionality
(the distance between an agent's value and their proportional share).
We present a strongly polynomial-time algorithm that, for a given assignment, finds a payment vector that minimizes the maximum pairwise-envy.
We also present a strongly polynomial-time algorithm that computes an assignment and payment-vector that together minimize the maximum disproportionality.

Second, directly asking the agents for their subjective valuations for their old apartments is infeasible, as it is a dominant strategy for them to report very high values for their old apartments. We introduce a novel  method to elicit agents' valuations indirectly. Using this method, we identify conditions under which our Minimum Disproportionality algorithm is risk-averse truthful.
\end{abstract}

\section{Introduction}
Urban renewal aims to transform aging residential infrastructure into modern, safe, and more spacious housing while simultaneously increasing urban density. 

\emph{Reconstruct and Divide}
 is a common urban renewal process. It involves demolishing an old building and constructing a new one, with original homeowners receiving upgraded apartments as compensation. While the primary goals include enhancing urban housing availability and improving disaster resilience, the success of these projects is often impeded by disputes among stakeholders, particularly concerning the assignment of new apartments.
These disagreements commonly stem from perceptions of unfairness, as homeowners compare the value of their newly-assigned units to others. Such cases often get to court and lead to lengthy judicial processes, leading to delays or even project cancellations.

\begin{Example}
\label[example]{example:court}
In one court proceeding we found the following claims by one homeowner:
(1) Her new apartment is larger than her old apartment by 19 sqm, whereas the new apartments of other homeowners are larger by 23 sqm;  (2) her new apartment is 1.5 floors higher than her old apartment, whereas the new apartments of other homeowners are higher by 2 floors; (3) her old apartment was in a half-floor with no adjacent neighbors, whereas her new apartment has adjacent neighbors; (4) her old apartment was square-shaped whereas her new apartment is not, but other homeowners did get a square-shaped apartment.

These claims were rejected by the judges, as they claimed that fairness only requires the equal treatment of equals, whereas homeowners with different old apartments are not apriori equal.
However, such subjective feelings of envy might lead to homeowners disagreeing to enter the process in the first place, foregoing its advantages for all parties involved.
\end{Example}



The baseline for this research is the problem often dubbed 
\emph{Rental Harmony} \citep{edward1999rental}, where some $n$ agents should be assigned some $n$ items (rooms or apartments) and some monetary payments, such that no agent envies the bundle (item + payment) of another agent. An envy-free assignment is known to exist under very broad conditions \cite{haake2002bidding,sung2004competitive,velez2018equitable,segal2022generalized,sanchez2022envy,peters2022robust,arunachaleswaran2022fully,velez2023equitable,airiau2023fair,procaccia2025multi}, and existing computational solutions are widely applied in practice \cite{gal2017fairest}.

The main challenge in the Reconstruct and Divide setting is that, as illustrated by \Cref{example:court}, the envy of agents is determined not by the value of their new apartment alone, but by the \emph{improvement} of their new apartment over their old one.
Even before going into details of how this improvement is computed, we can already see that an envy-free assignment might not exist.
\begin{Example}
\label[example]{example:no-ef} 
Suppose each agent believes that his old apartment is more valuable than all other old apartment, but the new apartments are all identical. In this case, regardless of the assignment, each agent will feel that the other agents got a better improvement. Payments might reduce the envy of some agents but increase the envy of others; no payment vector can totally eliminate the subjective envy (See \Cref{example:no-efable} for a formal proof).
\end{Example}
A common solution to this issue is to hire an appraiser to compute the market value of old apartments. However, as illustrated by \Cref{example:court}, agents' valuations are subjective and may differ from the market values.
\no{A central challenge in this work is to identify and formalize fairness and strategic guarantees that are both meaningful and attainable in the Reconstruct and Divide setting.}

As it might be impossible to completely eliminate envy, we focus on minimizing the envy of the most envious agent
(see \Cref{section:max-envy,section:proportionality} for the formal definition).

\emph{Proportionality} is another prominent fairness notion. In the standard fair division setting (with no old allocation), proportionality means that each agent should receive a bundle worth (subjectively) at least $1/n$ of the total, where the "total" is simply the sum of (subjective) values of all items. In our setting, as the allocation is of improvements, it makes sense to define the "total" as the sum of (subjective) improvements of all items.
For example, if you value all old apartments together at $100$ and all new apartments together at $150$, then proportionality guarantees that your own subjective value will improve by at least $50/n$.

Proportionality is weaker than envy-freeness: every envy-free allocation is proportional, but not the other way around. This holds both in the standard setting and in our setting (see \Cref{section:proportionality}).
Still, a proportional allocation might not exist.
This can be seen in \Cref{example:no-ef}: each agent believes that his improvement is smaller than the improvement of any other agent; hence, his improvement is necessarily smaller than $1/n$ the sum of improvements (see \Cref{section:proportionality} for a formal proof).
Similarly to the case of envy-freeness, we aim to minimize the maximum ``dis-proportionality'' --- the difference between each agent's proportional share to the agent's improvement.

\subsection{Our Results}
We assume throughout the paper that (a) all agents have quasilinear utilities; (b) all agents evaluate their improvements by computing the \emph{difference} between the values of their new apartments and their old apartments%
\footnote{
\label{footnote:ratio-model}
We considered an alternative model in which agents compute \emph{ratios} rather than differences, but could not yet get any interesting results for this model. See \Cref{section: Ratio Model} for details.
}.


In \Cref{section:max-envy} we study envy-freeness (EF). We provide a necessary and sufficient condition for an assignment to be ``EF-able'' (can be made EF using payments). 
Since an EF-able assignment might not exist, 
we aim to identify, for any given assignment, a payment-vector that minimizes the maximum envy over all agents. We prove that this minimum equals the \emph{maximum average cycle cost} in the envy-graph corresponding to the assignment. Hence, we can use existing algorithms to minimize the maximum envy for a given assignment in strongly-polynomial time%
\footnote{Strong polynomial time is a well-known concept in computer science, applicable to problems with numbers. Whereas a polynomial-time algorithm depends on the bit-length of input numbers, a strongly polynomial-time algorithm depends only on the number of input numbers. Strongly polynomial time is better as it avoids slowdowns from large or precise numbers.}.
However, we do not yet have a polynomial-time algorithm for finding an assignment that minimizes the maximum envy. Whether such an algorithm exists or not remains an intriguing open problem.

In \Cref{section:proportionality} we  study proportionality (PROP). We provide a necessary and sufficient condition for an assignment to be PROP-able. 
We show that it is possible to find, in polynomial time, an assignment and a payment-vector that minimize the maximum dis-proportionality over all agents; we call it the \emph{Minimum Disproportionality} mechanism. \no{Notably, this mechanism is based on a simple utilitarian-maximization algorithm originally developed for computing envy-freeable allocations.
We show that it can be repurposed to minimize disproportionality in the more complex setting with endowments.}

As the PROP model is more computationally tractable, we focus on this model and discuss, in \Cref{sec: manipulations}, its strategic aspects.
It is well-known that, even in the basic Rental Harmony setting, no fair and budget-balanced algorithm is truthful. But in the Reconstruct and Divide setting the situation is much worse, as the agents have a strong incentive to report the maximum possible value to their old apartment. In fact, this manipulation is a dominant strategy --- it is a \emph{safe manipulation} (see \Cref{sec: manipulations} for definitions).
To alleviate this issue, we introduce a new way to elicit agents' valuations: instead of asking them to directly report the values of old and new apartments, we ask them to report the values of apartments' \emph{characteristics}, such as apartment orientation, high floor etc. The reported characteristic may appear both in old and in new apartments. We explore the conditions under which the Minimum Disproportionality mechanism has no safe manipulations. \no{To the best of our knowledge, characteristic-based elicitation has not previously been used to improve strategic properties of fair-division mechanisms.}

In \Cref{sec: exp}, we simulated a Reconstruct and Divide scenario using survey-based valuations from 45 homeowners, comparing outcomes under minimizing envy and minimizing disproportionality models. The results reveal the gap between the models and underscore the importance and the need for improved allocations in the minimizing envy 
 model. \no{Our experiments also provide a novel dataset of valuations over item characteristics and show that the Minimum Disproportionality mechanism often attains better-than-proportional outcomes.}

All omitted proofs can be found in the supplementary material.

\subsection{Related Work}

\no{Our work connects several lines of research, including fair assignment with monetary transfers, initial endowments, variations of the house assignment problem, and relaxations of truthfulness.
}

\no{
The problem of minimizing subsidies to achieve envy-freeness was introduced by \citet{halpern2019fair}, who showed that any assignment can be made envy-free using monetary transfers. \citet{brustle2020one} extended this idea by allowing the assignment itself to vary and by providing tight bounds on the required subsidies. More recently, \citet{kawase2024towards} improved the known subsidy bounds by showing that from any EF1 allocation one can compute in polynomial time an envy-free allocation with per-agent subsidy at most $n-1$ and total subsidy at most $n(n-1)/2$, with even better bounds for monotone valuations.
}

\no{
Goko et al. \cite{goko2024fair} study an algorithm for computing envy-free allocations with subsidies that is also \emph{truthful}, when agents have submodular binary valuations.
}

\no{
Wu et al. \cite{wu2023one,wu2024tree} study proportional allocations with subsidies. They show that under additive valuations a total subsidy of $n/4$ is sufficient and tight to guarantee proportionality, and provide improved bounds and rounding schemes for the weighted case.
}

\no{
Finally, \citet{dai2024weighted} investigate weighted envy-freeness in the house allocation setting, presenting polynomial-time algorithms and structural characterizations of weighted envy-free and weighted envy-freeable assignments, under a definition of weighted envy-freeness that differs from the one adopted in our work.
}

\no{
For a broader perspective, \citet{liu2024mixed} survey recent advances in fair division and highlight open problems in mixed settings involving goods, chores, divisible and indivisible resources, and subsidies.
}

\no{
For ease of reference, a comparative summary of the discussed works, including their settings, fairness notions, and main results, is provided in ~\Cref{tab:related_work} 
}

\begin{table}[H]
\centering
\renewcommand{\arraystretch}{1.05}
\setlength{\tabcolsep}{6pt}

\begin{tabularx}{\textwidth}{|X|X|X|X|}
\hline
\textbf{Reference} & \textbf{Setting} & \textbf{Fairness} & \textbf{Main result} \\
\midrule
~\citet{halpern2019fair} 
& Indivisible goods 
& Envy-freeness
& Subsidy minimization; polynomial time algorithm for fixed allocations. \\
\hline
~\citet{brustle2020one} 
& Indivisible goods 
& Envy-freeness
& Envy-free allocations with bounded per-agent subsidies. \\
\hline
~\citet{goko2024fair} 
& Indivisible goods
&Envy-freeness, truthful
& Truthful EF mechanism with bounded subsidies. \\
\hline
~\citet{kawase2024towards} 
&Indivisible goods
& Envy-freeness
& Polytime conversion from EF1 to EF with improved bounds. \\
\hline
~\citet{wu2023one}, ~\citet{wu2024tree}
& Goods \& chores
& Proportionality 
& Tight and improved subsidy bounds for proportionality. \\
\hline
~\citet{dai2024weighted} 
& House allocation
& Weighted\ envy-freeness 
& Characterization and existence of weighted EF and EF-able assignments. \\
\hline
~\citet{liu2024mixed} 
&  Mixed fair division
& Various 
& Survey and open problems. \\
\bottomrule
\end{tabularx}
\caption{Comparison of representative related works on fair division with subsidies, highlighting problem settings, fairness notions, and main results.}
\label{tab:related_work}
\end{table}

In the specific context of house assignment—where each agent receives exactly one item—few works address envy-freeness. \citet{gan2019envy} proposed a polynomial-time algorithm to decide and compute envy-free assignments in such settings.

Our setting also draws on the literature addressing repeated allocations, where items are assigned multiple times rather than just once. Two main approaches have been considered: one, as proposed in our work, aims to preserve relative envy across repetitions; the other focuses on balancing envy over time. \citet{balan2011long} study long-term fairness by prioritizing the most disadvantaged agents in repeated decisions. \citet{igarashi2024repeated} introduce a model for repeated allocation of goods and chores, proving the existence of proportional and Pareto-optimal sequences under certain conditions, while noting that envy-freeness may require relaxed criteria or flexible repetition counts. To our knowledge, repeated fair allocation with monetary transfers has not yet been addressed.

Another relevant body of work concerns fair assignments with initial endowments and strategic considerations. \citet{yilmaz2010probabilistic} introduced the Probabilistic Serial mechanism to incorporate private endowments, ensuring ordinal efficiency and individual rationality via a generalized eating algorithm \citep{bogomolnaia2001new}, though at the cost of strategy-proofness. More recently, \citet{thomson2024allocation, thomson2024manipulability} demonstrated that most standard mechanisms are manipulable, even under assumptions such as homothetic preferences or penalties for dishonesty. In the housing context, \citet{abdulkadirouglu1999house} designed a mechanism that is strategy-proof, Pareto-efficient, and respects existing tenants’ rights. Finally, \citet{segal2022redividing} addressed re-division problems, proposing mechanisms that balance fairness with respect for ownership, and analyzing trade-offs using the price-of-fairness metric.

See \Cref{app:related-work} for further related work.

\renewcommand\cellalign{l}

\begin{toappendix}
\subsection{Further Related Work} \label{app:related-work}

\paragraph{House assignment.}
The house assignment problem, also referred to as the assignment problem, involves allocating a set of $m$ houses to a set of $n$ agents based on their preferences, ensuring each agent receives exactly one house \cite{hylland1979efficient, zhou1990conjecture}. 
The problem is often studied under the lens of economic efficiency, where Pareto optimality is a common goal—ensuring that no reassignment improves some agents’ outcomes without worsening others, as noted by Abraham et al. \cite{abraham2004pareto} in their study on Pareto optimality in house assignment problems, and Manlove in \cite{manlove2013algorithmics}.

Initial studies predominantly focused on achieving strategy-proofness and stability in house assignment mechanisms. For instance, Svensson \cite{svensson1999strategy} provided characterizations for strategy-proof assignments under strict preferences. Building on this foundation, Abdulkadiroğlu and Sönmez \cite{abdulkadirouglu1999house} explored mechanisms such as deferred acceptance and top trading cycles, emphasizing the trade-offs between fairness and efficiency.

Recent work has shifted towards fairness considerations. Gan et al. \cite{gan2019envy} present a polynomial-time algorithm that determines whether an envy-free assignment exists, and if so, computes one such assignment. Additionally, they show that an envy-free assignment exists with high probability if the number of houses exceeds the number of agents by a logarithmic factor. 
Kamiyama et al. \cite{kamiyama2021complexity} study fairness in house assignment, showing that maximizing the number of envy-free agents is hard to approximate within a factor of 
$n^{1-\alpha}$ for any constant $\alpha > 0$. They prove that the exact problem is NP-hard, even for binary utilities, while deciding the existence of proportional assignments is computationally hard, and equitability can be efficiently determined.

In the context of binary preferences, where agents classify houses as acceptable or unacceptable, Aigner-Horev and Segal-Halevi \cite{aigner2022envy} addressed the maximum envy-free partial matching problem, proposing efficient algorithms for special cases. In the context of dichotomous preferences, where agents classify houses as acceptable or unacceptable, Aziz \cite{aziz2016mechanisms} proposed strategy-proof and polynomial-time algorithms to maximize the number of agents receiving acceptable houses while ensuring individual rationality. Additionally, his work addressed scenarios with existing tenants, achieving core stability in these settings.

 Hosseini et al. \cite{hosseini2023graphical} extend the classical house assignment problem by introducing a graphical setting where agents are positioned at the vertices of a social network. In this model, an agent only experiences envy toward its neighbors. Their work focuses on minimizing the total pairwise envy across all edges in the social graph. They provide structural insights and computational results for various graph classes, including disjoint unions of paths, cycles, and more.

\paragraph{Truthfulness Relaxations.}
Several studies have explored relaxations of truthfulness that restrict attention to specific types of manipulations deemed more plausible. These relaxations ensure that no agent benefits from any manipulation within the designated subset. The precise definition of "likely" manipulations varies across different approaches.

Brams et al. \cite{brams2006better} introduced maximin strategy-proofness, a concept in which an agent chooses to manipulate only when it is guaranteed to be beneficial in every scenario. Waxman et al. \cite{waxman2021manipulation} were, to the best of our knowledge, the first to explicitly use the term "safe manipulation". Their work analyzed agents' manipulation strategies under three different levels of knowledge in social networks.

In the context of cake-cutting, Bu et al. \cite{buOnExistence} defined a mechanism as risk-averse truthful (RAT) if no agent can manipulate in a way that is both safe and profitable. 
Recently, Hartman et al. \cite{hartman2025s} extended RAT beyond cake-cutting to general social choice settings and proposed a quantitative measure to evaluate a mechanism’s resilience to manipulation. They introduced the RAT-degree, which quantifies the minimum number of agents whose reports must be known to enable safe manipulation, thereby bridging the gap between classic truthfulness and RAT.

Another related concept, not-obvious manipulability (NOM), was introduced by Troyan and Morrill \cite{troyan2020obvious}. This framework assumes that agents consider only extreme outcomes—the best or worst cases—when deciding whether to manipulate. RAT and NOM are independent notions. 
\end{toappendix}

\section{Preliminaries}
\label{section:model}

\paragraph{Agents and valuations.}
We denote by $[t]$ the set $\{1, 2, ..., t\}$ for any positive integer $t$.
There are $n$ agents; the set of agents is denoted $N = [n]$.
The set of old apartments is denoted $O = \{o_1,\ldots,o_n\}$, where $o_i$ is the old apartment owned by agent $i$.
The set of new apartments is denoted $M = \{a_1,...,a_n\}$.

An \textit{assignment} is a bijective mapping from agents to new apartments. It is denoted $A = (A_1,\ldots,A_n)$, where $\forall {i\in N}: A_{i} \in M$, and $A_i \neq A_j$ for all $i \neq j \in N$, and $A_i$ is the new apartment assigned to agent $i$.

Each agent $i \in N$ has a \textit{valuation function} $v_{i} : M \cup O \rightarrow \mathbb{R}_{\geq 0}$, indicating how much they value different apartments. 
So $v_i(A_i)$ represents agent $i$'s valuation of their new apartment under assignment $A$. 
Note that the classic model is equivalent to assuming that all agents value all old apartments at $0$.

In addition to receiving item $A_i$, each agent $i \in N$ is given a payment $p_i$,
where the vector of payments is denoted as $\price{} = (p_1 ,..., p_n).$ These payments can be either positive or negative. A payment vector $\price{}$ is said to be \emph{balanced} if $\sum_{i\in N} p_i = 0$.

We assume that agents are \emph{quasilinear}, so that the utility of each agent for an apartment and payment is
$$
u_i(A_i,p_i) := 
v_{i}(A_{i}) + p_{i}
.$$
Moreover, we assume that the subjective happiness of each agent is determined by how much his new bundle (new apartment plus payment) is better than his old apartment. We measure this improvement by the \emph{difference}:$^{\ref{footnote:ratio-model}}$
\begin{align*}
d_i(a, o, p) := u_i(a,p) - v_i(o)
=
v_{i}(a) - v_i(o) + p.
\end{align*}
So the subjective happiness of agent $i$ given allocation $A$ and payment vector $\price$ is $d_i(A_i,o_i,p_i)$; and the subjective happiness that $i$ attributes to some other agent $j$ is $d_i(A_j, o_j, p_j)$.
We also denote $d_i(a,o) := d_i(a,o,0)$.

\paragraph{Utilitarian social welfare.}
The \textit{utilitarian social welfare} of an assignment $A$ is $\sum_{i\in N} v_{i}(A_{i})$. An assignment $A$ of the new apartments is called \emph{utilitarian welfare maximizing} if $\sum_{i\in N}v_{i}(A_{i})\geq \sum_{i\in N}v_{i}(B_{i})$ for any other assignment $B$.

\section{Minimizing Envy}
\label{section:max-envy}
Given an assignment $A$ and payment-vector $\price{}$, we define the \emph{envy} felt by an agent $i$ by:
\begin{align*}
ENVY_i(A,\price{}) := \max_{j\in N} d_i(A_j,o_j,p_j) - d_i(A_i,o_i,p_i).
\end{align*}
The pair $(A,\price{})$ is called \emph{Envy Free (EF)} if $ENVY_i(A,\price{})\leq 0$ for all $i\in N$. Equivalently, $d_i(A_i,o_i,p_i)\geq d_i(A_j,o_j,p_j)$ for all $i,j\in N$.
An assignment $A$ is called \emph{EF-able} if there exists a payment vector $\price{}$ such that $(A,\price{})$ is EF.~
Note that this definition reduces to the standard definition of envy-freeness when there are no old apartments (equivalently, when $v_i(o_j)=0$ for all $i,j\in N$).

\subsection{The Envy Graph}
To analyze the conditions for EF-ability, we extend the notion of \emph{envy-graph}, 
which was introduced by \citet{aragones1995derivation}
and studied also by \citet{halpern2019fair}.

The \emph{envy-graph} of an assignment $A$ and a payment vector $\mathbf{p}$, denoted $G_{A,\mathbf{p}}$, is a complete directed graph in which the set of nodes is the set $N$ of agents. Each pair of agents is connected by arcs in both directions.
For any pair of agents $i,j\in N$, the arc $(i,j)$ has a \emph{cost} defined as the envy $i$ feels toward $j$ under assignment $A$, that is, $cost_A(i,j) := \left(v_i(A_j) + p_j \right) - \left(v_i(A_i) + p_i\right)$.
We denote the cost of a path $(i_1,...,i_k)$ as $cost_A(i_1,...,i_k) = \sum_{j = 1}^{k-1} cost_A(i_j, i_{j+1})$. 

For convenience, when the payment vector is zero (i.e., no payments are made), we denote it by $G_A$.

In addition to the envy-graph $G_A$ of the new apartments assignment, we are also interested in the following envy-graphs:
\begin{itemize}
\item \textbf{The original apartments graph, $G_{O}$}, where the cost of every arc $(i,j)$ is $cost_O(i,j) := v_i(o_j) - v_i(o_i)$;

\item \textbf{The difference graph, $G_{A,O}$}, where 
the cost of every arc $(i,j)$ is $(v_i(A_j) - v_i(o_j)) - ( v_{i}(A_i) - v_i(o_i)) = cost_A(i,j) - cost_O(i,j)$.
If the weights on each graph are represented by a matrix, then the matrix of $G_{A,O}$ is simply the difference between the matrix of $G_A$ and the matrix of $G_O$; 
shortly, $G_{A,O} = G_{A} - G_{O}$.
\end{itemize}

Using the difference envy-graph, we can extend the characterization of \citet{halpern2019fair} to the Reconstruct and Divide setting.
\begin{theorem} 
\label{theorem:mefable-iff-cost-cycles}
For $O$ the assignment of the original apartments and $A$ the new apartments assignment, the following are equivalent:
\begin{enumerate} [label=(\alph*)]
\item $A$ is EF-able;
\item $
\sum_{i\in N} v_i(A_i) - v_i(o_i) \geq \sum_{i\in N} v_i(A_{\pi(i)}) - v_i(o_{\pi(i)})$, 
for all permutations $\pi$ of $N$;
\item The difference graph $G_{A,O}$ has no positive-cost cycles; equivalently, for every directed cycle $C$, $cost_A(C) \leq cost_O(C)$. \label{condition b of theorem theorem:mefable-iff-cost-cycles}
\end{enumerate}
\end{theorem}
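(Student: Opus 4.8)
The plan is to prove the chain of equivalences (a) $\Leftrightarrow$ (b) $\Leftrightarrow$ (c), following the template of Halpern--Shah and Aragones for the classical (no-endowment) case, but carried out in the difference graph $G_{A,O}$ rather than in $G_A$. The key observation that makes this work is the identity $G_{A,O} = G_A - G_O$ already recorded in the excerpt: the cost of any arc $(i,j)$ in the difference graph is exactly $cost_A(i,j) - cost_O(i,j) = d_i(A_j,o_j) - d_i(A_i,o_i)$, which is precisely the envy agent $i$ feels toward agent $j$ in the endowment setting when payments are zero. So envy in the Reconstruct-and-Divide model \emph{is} ordinary envy, but measured in the graph $G_{A,O}$; everything then reduces to the classical characterization applied to this graph.

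For (b) $\Leftrightarrow$ (c): a permutation $\pi$ decomposes into disjoint cycles, and $\sum_{i\in N}\bigl(d_i(A_i,o_i) - d_i(A_{\pi(i)},o_{\pi(i)})\bigr)$ splits as a sum over those cycles; each cyclic summand is exactly $-cost_{A,O}(C)$ for the corresponding directed cycle $C$ in $G_{A,O}$ (traversed in the direction $\pi$ induces). Hence the inequality in (b) holding for \emph{all} $\pi$ is equivalent to $cost_{A,O}(C) \le 0$ for all directed cycles $C$, which is exactly the ``no positive-cost cycle'' condition in (c); the reformulation $cost_A(C) \le cost_O(C)$ is then immediate from $cost_{A,O} = cost_A - cost_O$ summed along $C$.

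For (a) $\Rightarrow$ (b): if $(A,\price)$ is EF then $d_i(A_i,o_i,p_i) \ge d_i(A_{\pi(i)},o_{\pi(i)},p_{\pi(i)})$ for every $i$; summing over $i\in N$, the payment terms $\sum_i p_i$ and $\sum_i p_{\pi(i)}$ cancel because $\pi$ is a bijection, leaving exactly inequality (b). For the converse (c) $\Rightarrow$ (a) --- which I expect to be the main obstacle, since it is the only direction that must actually construct a payment vector --- I would use the standard shortest-path argument: since $G_{A,O}$ has no positive-cost cycle, equivalently $-G_{A,O}$ has no negative-cost cycle, the shortest-path distances $\delta(i)$ from a fixed source (or, to avoid reachability issues, add a dummy source with zero-cost arcs to all nodes) in $-G_{A,O}$ are well-defined and satisfy $\delta(j) \le \delta(i) + \bigl(-cost_{A,O}(i,j)\bigr)$ for all $i,j$, i.e. $cost_{A,O}(i,j) \le \delta(i) - \delta(j)$. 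Setting $p_i := -\delta(i)$ (possibly after adding a common constant if a balanced vector is wanted, which does not affect envy) gives $cost_{A,O}(i,j) \le p_j - p_i$, i.e. $d_i(A_j,o_j) - d_i(A_i,o_i) \le p_j - p_i$, which rearranges to $d_i(A_i,o_i,p_i) \ge d_i(A_j,o_j,p_j)$ for all $i,j$ --- precisely envy-freeness of $(A,\price)$. The one point requiring a little care is the direction conventions (envy arc $(i,j)$ versus shortest-path relaxation) and the sign flip between ``no positive cycle in $G_{A,O}$'' and ``no negative cycle in $-G_{A,O}$''; once those are pinned down, the argument is the classical one and goes through verbatim.
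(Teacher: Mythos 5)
Your proposal follows essentially the same route as the paper: (a)$\Rightarrow$(b) by summing the envy-freeness inequalities over a permutation so the payments cancel, (b)$\Leftrightarrow$(c) via the correspondence between permutation cycles and directed cycles of $G_{A,O}$ (the paper only does (b)$\Rightarrow$(c) with a single-cycle permutation and closes the loop through (c)$\Rightarrow$(a), whereas you get both directions at once from the cycle decomposition --- a harmless reorganization), and (c)$\Rightarrow$(a) by a graph-potential construction (the paper uses longest-path values $\ell_i$ from each node; your shortest-path distances from a dummy source in $-G_{A,O}$ are an equivalent potential). The one real issue is the sign you yourself flagged: envy-freeness of $(A,\price)$ requires $cost_{A,O}(i,j)\le p_i-p_j$ (since the envy of $i$ toward $j$ under payments is $cost_{A,O}(i,j)+p_j-p_i$), but from $cost_{A,O}(i,j)\le\delta(i)-\delta(j)$ your choice $p_i:=-\delta(i)$ yields $cost_{A,O}(i,j)\le p_j-p_i$, which is the reverse inequality and does not rearrange to $d_i(A_i,o_i,p_i)\ge d_i(A_j,o_j,p_j)$ as claimed. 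Setting $p_i:=+\delta(i)$ (shifted by a common constant for balance) fixes this, and the rest goes through as you describe.
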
 
\begin{proof}
$(a) \Rightarrow (b)$: Suppose $A$ is EF-able. Then, there exists a payment vector \textbf{p} such that$(A,\price{})$ is EF. that is, for all agents $i,j$, $v_i(A_i) - v_i(o_i) + p_i \geq v_i(A_j) - v_i(o_j) + p_j$. Equivalently, $$\left( v_i(A_j) - v_i(o_j) \right) - \left( v_i(A_i) - v_i(o_i) \right) \leq p_i - p_j.$$

Consider any permutation $\pi$ of $N$. Then,
\begin{align*}
    & \sum_{i\in N} \left(\left( v_i(A_{\pi(i)}) - v_i(o_{\pi(i)}) \right) - \left( v_i(A_i) - v_i(o_i) \right) \right) \leq \\
    &\sum_{i\in N} \left( p_i - p_{\pi(i)}\right) = 0.
\end{align*}
The last entry is zero as all the payments are considered twice, and they cancel out each other. Hence, 
\[
\sum_{i\in N} v_i(A_{i}) - v_i(o_{i}) \geq \sum_{i\in N} v_i(A_{\pi(i)}) - v_i(o_{\pi(i)})
\]
for each permutation $\pi$ of $N$.

$(b) \Rightarrow (c)$: Suppose that $
\sum_{i\in N} v_i(A_i) - v_i(o_i) \geq \sum_{i\in N} v_i(A_{\pi(i)}) - v_i(o_{\pi(i)})$, 
for some assignment $A$, and all permutations $\pi$ of $N$. 
Consider a cycle $C = (i_{1} , ... , i_{r})$ in $G_{A,O}$, and a permutation $\pi$, defined for each agent $i_k \in N$ as follows: \[\pi(i_k) = \begin{cases}
    i_k, &i_k\notin C \\
    i_{k+1}, & k \in \{1, \ldots, r-1 \} \\
    i_1 & k = r 
\end{cases}. \]
Examining the cost of $C$ in the difference graph $G_{A,O}$:
\begin{equation*} 
\begin{split}
    &cost_A(C) - cost_O(C) = cost_{A,O}(C) = \\
    &\sum_{i\in C} \left(\left(v_{i}(A_{\pi(i)}) - v_i(o_{\pi(i)}) \right) - \left(v_{i}(A_{i}) - v_{i}(o_i) \right)\right) \leq 0.
\end{split}
\end{equation*}
The validity of the equality arises from the assumption.

$(c) \Rightarrow (a)$: Suppose $G_{A,O}$ has no positive-cost cycles. Then, we can define for each agent $i$, the maximum-cost of any path in the difference graph that starts at $i$. We denote the cost of this path by $\ell_i$. For each $i \in N $, let $p_i = \ell_i - \frac{\sum_{i\in N} \ell_i}{n}$. 
It is noteworthy that $\sum_{i\in N}p_{i} = 0$, establishing the suitability of \textbf{p} as a balanced payment vector.
Furthermore, in accordance with the definition of highest-cost paths, it follows that for all $i \neq j \in N$:
\begin{align*}
     &p_i =  \ell_i - \frac{\sum_{i\in N} \ell_i}{n}  \geq 
     cost_{A,O}(i,j) + \ell_j -\frac{\sum_{i\in N} \ell_i}{n} = \\ &
     \left( v_i(A_j) - v_i(o_j) \right) - \left( v_i(A_i) - v_i(o_i) \right) + \ell_j - \frac{\sum_{i\in N} \ell_i}{n} =  \\ &
     v_i(A_j) - v_i(o_j) + p_j - \left(v_i(A_i) - v_i(o_i) \right) \Longleftrightarrow \\ & v_i(A_i) - v_i(o_i) + p_i \geq v_i(A_j) - v_i(o_j) + p_j.
\end{align*}

Hence, $(A,\price{})$ are envy-free, and thus, $A$ is EF-able.
\end{proof}

With no old apartments, any utilitarian-welfare-maximizing assignment welfare is EF-able \citep{haake2002bidding,sung2004competitive}.
We extend this as follows.
\begin{proposition} \label{prop:sufficient-conditions-for-mef}
Let $A$ be a utilitarian-welfare-maximizing assignment. Then $A$ is EF-able if one of the following holds:

(a) The original envy-graph $G_O$ has no negative-cost cycles;

(b) All $n$ agents assign the same values (e.g. market prices) to the original apartments.
\end{proposition}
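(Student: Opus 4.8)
The plan is to use the characterization from \Cref{theorem:mefable-iff-cost-cycles}, specifically condition \ref{condition b of theorem theorem:mefable-iff-cost-cycles}: it suffices to show that the difference graph $G_{A,O}$ has no positive-cost cycle, i.e. that $cost_A(C) \leq cost_O(C)$ for every directed cycle $C$. Since $G_{A,O} = G_A - G_O$, this amounts to showing $cost_A(C) \leq cost_O(C)$ for all cycles $C$, and both cases will exploit the fact that a utilitarian-welfare-maximizing assignment forces $G_A$ to have no positive-cost cycles (this is the classical fact underlying EF-ability of welfare-maximizing assignments: summing the cost inequalities around a cycle and comparing with the permutation obtained by composing with the cycle gives $cost_A(C) \leq 0$, otherwise the assignment could be improved).

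For part (a): first establish that because $A$ maximizes utilitarian welfare, $cost_A(C) \leq 0$ for every cycle $C$ in $G_A$ — the standard argument is that reassigning along the cycle $C$ (sending agent $i_k$'s apartment to $i_{k-1}$) changes the total welfare by exactly $cost_A(C)$, so if this were positive $A$ would not be welfare-maximizing. Then, by hypothesis, $G_O$ has no negative-cost cycles, so $cost_O(C) \geq 0$ for every cycle $C$. Combining, $cost_A(C) \leq 0 \leq cost_O(C)$, hence $cost_{A,O}(C) = cost_A(C) - cost_O(C) \leq 0$, and condition (c) holds. I would also note that $G_O$ is a complete directed graph where $cost_O(C)$ for a cycle and $cost_O$ for its reverse sum to zero (since $cost_O(i,j) + cost_O(j,i) = 0$ for 2-cycles, and more generally reversing a cycle negates its cost when costs are differences $v_i(o_j) - v_i(o_i)$ — actually one must be slightly careful: reversing a cycle does \emph{not} simply negate the cost in general, so the ``no negative-cost cycles'' assumption is genuinely an assumption, not automatic; I would just invoke it directly).

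For part (b): if all agents assign the same values to original apartments, write $v_i(o_j) = w(o_j)$ for a common function $w$. Then for any cycle $C = (i_1, \ldots, i_r, i_1)$, the cost $cost_O(C) = \sum_{k} \big(w(o_{i_{k+1}}) - w(o_{i_k})\big)$ telescopes to $0$. So it suffices to show $cost_A(C) \leq 0$ for every cycle, which again follows from $A$ being welfare-maximizing by the same cycle-reassignment argument as in part (a). Hence $cost_{A,O}(C) = cost_A(C) - 0 \leq 0$ and condition (c) of \Cref{theorem:mefable-iff-cost-cycles} is satisfied.

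The main obstacle — and it is a mild one — is making the ``welfare-maximizing implies no positive-cost cycle in $G_A$'' step fully rigorous: one needs to check that permuting apartments along a directed cycle yields another valid assignment and that the welfare difference is exactly $cost_A(C)$, being careful with the direction convention ($cost_A(i,j) = v_i(A_j) + p_j - v_i(A_i) - p_i$, and here payments are zero). Once that lemma is in hand, both parts are short. A secondary point worth stating explicitly is why (a) and (b) are not redundant: under (b) the original graph has \emph{all} cycles of cost exactly $0$, which is a special case of ``no negative-cost cycles,'' so in fact (b) is subsumed by (a); I would either remark on this or present (b) separately because the common-values hypothesis is the natural one in practice (an appraiser sets market prices) and deserves its own statement.
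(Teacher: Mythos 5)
Your proof is correct and follows essentially the same route as the paper: both reduce to condition (c) of \Cref{theorem:mefable-iff-cost-cycles}, use the fact that welfare maximization forces $cost_A(C)\le 0$ for every cycle (the paper cites \citet{halpern2019fair} for this where you spell out the cycle-reassignment argument), and handle (b) by observing that common values make every cycle in $G_O$ cost zero, reducing it to (a). Your side remarks — that reversing a cycle does not negate its cost, and that (b) is a special case of (a) — are accurate and match the paper's treatment.
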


\begin{proof}
(a) If $A$ is utilitarian-welfare-maximizing, then $G_A$ has no positive-cost cycles \citep{halpern2019fair}.
For each directed cycle $C$, its cost in $G_{A,O}$ equals its cost in $G_A$ minus its cost in $G_O$. As $G_O$ has no negative-cost cycles, the cost of $C$ in $G_{A,O}$ is non-positive. Hence $A$ is EF-able by \Cref{theorem:mefable-iff-cost-cycles}. 

(b) Under the given conditions, the cost of every directed cycle in $O$ is zero. Hence $A$ is EF-able by part (a).
\end{proof}

On the negative side, there are cases in which no assignment of the new apartments is EF-able.
The following example formalizes the intuitive explanation given in Example \ref{example:no-ef}.
\begin{Example} 
\label[example]{example:no-efable}
Suppose there are two agents $i_1$ and $i_2$ with valuations:
\[
\begin{bmatrix}
   & o_1 & o_2 & & a_1 & a_2 \\
  i_1 & z + \epsilon & z & & V_1 & V_1 \\
  i_2 & y & y + \delta & & V_2 & V_2 \\
\end{bmatrix}.
\]
when $y,z,\epsilon,V_1, V_2 > 0$.
Let $C$ be the cycle $i_1 \to  i_2 \to i_1$.
Then $cost_O(C) = -(\epsilon + \delta)$.
In contrast, for any assignment $A$ of the new apartments, $cost_A(C) = 0$.
Hence, $$cost_{A,O}(C) = cost_A(C) - cost_O(C) = \epsilon + \delta > 0.$$ By \Cref{theorem:mefable-iff-cost-cycles}, $A$ is not EF-able. \qed
\end{Example}

\subsection{Max average cycle cost and min envy} \label{sec: max mean cycle cost}
Since EF-able assignments may not exist (Example \ref{example:no-efable}), our goal shifts to minimizing the largest envy experienced by an agent. 
We begin with a lemma demonstrating that envy may remain despite efforts to eliminate it using balanced payments.
\begin{lemma}
	 \label[lemma]{lemm: cycle cost unchanged}
    The total cost of any cycle in $G_{A,O}$ is the same for any payment vector.
\end{lemma}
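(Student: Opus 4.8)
The plan is to observe that adding payments changes each arc cost $cost_A(i,j)$ by exactly $p_j - p_i$, and then to show that such a perturbation contributes nothing to the total cost around a closed cycle because each node is entered and left exactly once, so the payment terms telescope to zero.

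First I would fix an assignment $A$, the original assignment $O$, and an arbitrary payment vector $\price = (p_1,\ldots,p_n)$. Recall from the definition that in $G_{A,\price}$ the arc $(i,j)$ has cost $\bigl(v_i(A_j)+p_j\bigr) - \bigl(v_i(A_i)+p_i\bigr) = cost_A(i,j) + (p_j - p_i)$, where $cost_A(i,j)$ is the cost in $G_A$ (the zero-payment graph). The old-apartments graph $G_O$ does not involve the payments at all. Hence the arc $(i,j)$ of the difference graph $G_{A,O}$, computed with payments, has cost $cost_A(i,j) + (p_j - p_i) - cost_O(i,j) = cost_{A,O}(i,j) + (p_j - p_i)$, i.e.\ it differs from the zero-payment difference-graph cost by the additive term $p_j - p_i$.

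Next I would take any directed cycle $C = (i_1, i_2, \ldots, i_r, i_1)$ in $G_{A,O}$ and sum the arc costs. The payment contribution is $\sum_{k=1}^{r-1}(p_{i_{k+1}} - p_{i_k}) + (p_{i_1} - p_{i_r})$, which is a telescoping sum equal to $0$. Therefore the total cost of $C$ computed with payments equals $\sum_{k} cost_{A,O}(i_k,i_{k+1})$, the cost computed with the zero payment vector, and in particular it is independent of $\price$. Since $C$ was an arbitrary cycle, the claim follows.

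There is no real obstacle here; the only thing to be careful about is bookkeeping — making sure the payment term attached to each arc is $p_{\text{head}} - p_{\text{tail}}$ with the right sign, so that going once around a cycle each $p_{i_k}$ appears once with a plus sign (from the arc into $i_k$) and once with a minus sign (from the arc out of $i_k$), cancelling. A one-line remark that this is exactly the standard ``potentials do not change cycle weights'' fact from shortest-path theory would make the argument fully transparent.
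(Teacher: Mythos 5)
Your proof is correct and takes essentially the same approach as the paper's: both arguments observe that each payment $p_i$ enters the cycle cost once with a plus sign (on the arc into $i$) and once with a minus sign (on the arc out of $i$), so all payment terms cancel. The paper phrases this by perturbing one agent's payment at a time, whereas you write the full telescoping sum, but the underlying idea is identical.
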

\begin{proof}
    Suppose we give some agent $i$ a payment of $p_i$. As a result, the cost of every edge from $i$ decreases by $p_i$ (as $i$ experiences less envy), and the cost of every edge into $i$ increases by $p_i$ (as other agents experience more envy in $i$).
Every cycle through $i$ contains exactly one edge from $i$ and one edge into $i$, and every other cycle contains no such edges. Therefore, the total cost of any cycle remains unchanged.
\end{proof}

Now, we demonstrate how to construct a payment vector for a given assignment that minimizes the largest envy, that is,
we solve the problem $\displaystyle \min_{\price{}} \max_{i\in N} ENVY_i(A,\price{})$. 
~
Although this can be done by solving a linear program,
our algorithm runs in strongly-polynomial time and also reveals interesting links between our problem and fundamental graph-theoretic concepts.
\begin{Definition}
Given a directed graph $G$ with edge costs, 
The \emph{average cost} of a path is the total cost of the path divided by the number of edges in it.
The \emph{maximum average cycle cost (MACC)} of $G$ is the maximum, over all directed cycles $C$ in $G$, of the average cost of $C$.
\end{Definition}
In the literature, the MACC of $G$ is  also known as its \emph{maximum mean cycle weight} \cite{karp1978characterization}.
\no{The MACC of any given directed graph can be computed in strongly-polynomial time (\cite{karp1978characterization,v1982optimal}).}
The following lemma relates the MACC to the minimum attainable envy.

\begin{lemma} \label{lemm: envy bound mean cycle}
For a given assignment $A$, it is possible to compute in polynomial time a payment vector $\price{}$ such that the envy between any two agents is bounded by the maximum average cycle cost of $G_{A,O}$. Specifically, for all $i,j \in N$, the following holds:
$$
d_i(A_j,o_j,p_j) -d_i(A_i,o_i,p_i)
\leq c,$$ 
where $c := MACC(G_{A,O})$.
\end{lemma}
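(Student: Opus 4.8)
The plan is to mimic the classical construction from the proof of $(c)\Rightarrow(a)$ in \Cref{theorem:mefable-iff-cost-cycles}, but applied to a \emph{shifted} version of the difference graph $G_{A,O}$ in which every edge cost is reduced by $c := MACC(G_{A,O})$. Call this graph $G'$. By \Cref{lemm: cycle cost unchanged} the cost of any cycle in $G_{A,O}$ is independent of the payment vector, and by definition of the MACC, every cycle $C$ in $G_{A,O}$ has average cost at most $c$, i.e. $cost_{A,O}(C) \le c\cdot |C|$. Subtracting $c$ from each of the $|C|$ edges of $C$ therefore makes $cost_{G'}(C) \le 0$; so $G'$ has no positive-cost cycle.

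With $G'$ having no positive-cost cycle, I can define, for each agent $i$, the quantity $\ell_i$ as the maximum cost over all directed paths in $G'$ starting at $i$ (finite, since there are no positive cycles and only finitely many simple paths). These satisfy the ``triangle-type'' inequality $\ell_i \ge cost_{G'}(i,j) + \ell_j$ for all $i\ne j$, exactly as in the earlier proof. Then I set $p_i := \ell_i - \frac{1}{n}\sum_{k\in N}\ell_k$, which is a balanced payment vector. Unwinding $cost_{G'}(i,j) = cost_{A,O}(i,j) - c = \bigl(v_i(A_j)-v_i(o_j)\bigr) - \bigl(v_i(A_i)-v_i(o_i)\bigr) - c$, the inequality $\ell_i \ge cost_{G'}(i,j) + \ell_j$ rearranges (after adding $-\frac1n\sum_k\ell_k$ to both sides and substituting $p_i,p_j$) to
\[
\bigl(v_i(A_i)-v_i(o_i)+p_i\bigr) \;\ge\; \bigl(v_i(A_j)-v_i(o_j)+p_j\bigr) - c,
\]
which is precisely the claimed bound $d_i(A_j,o_j,p_j) - d_i(A_i,o_i,p_i) \le c$.

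For the running-time claim, note $\ell_i$ can be computed via a shortest/longest-path computation (e.g. Bellman--Ford on $-G'$, which has no negative cycle) in polynomial time; the paper's stronger remark that $MACC$ itself is computable in strongly-polynomial time is cited separately, so here it suffices to observe the payment construction is polynomial given $c$. I expect the main point requiring care is the justification that $\ell_i$ is well-defined and finite — this rests on ``no positive-cost cycle in $G'$'', which in turn is exactly where the definition of $c$ as the \emph{maximum} average cycle cost is used; everything after that is the same rearrangement already carried out in \Cref{theorem:mefable-iff-cost-cycles}, now with an extra additive $c$ bookkeeping term. A secondary subtlety is confirming that shifting all edge costs by the same constant $c$ changes each cycle's cost by exactly $c\cdot|C|$ (true because every cycle on $n$ nodes uses exactly as many edges as its length), so the ``no positive cycle'' property transfers cleanly.
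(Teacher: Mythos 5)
Your proposal is correct and follows essentially the same route as the paper's own proof: subtract $c$ from every edge of $G_{A,O}$, observe that the resulting graph has no positive-cost cycle by definition of the MACC, and then reuse the longest-path payment construction from the proof of \Cref{theorem:mefable-iff-cost-cycles} to obtain the bound with the extra additive $c$. The only cosmetic difference is that you invoke Bellman--Ford on the negated graph where the paper uses Floyd--Warshall; both give the required polynomial running time.
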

\begin{proof} 
Let $C$ be a cycle with the maximum average cost in $G_{A,O}$.
meaning that for any other cycle $C'$, it holds that $c = \frac{cost_A(C)}{|C|} \geq \frac{cost_A(C')}{|C'|}$.

Temporarily modify $G_{A,O}$ to obtain $G'_{A,O}$, where the cost of each edge is its original cost in $G_{A,O}$ minus $c$. The cost of every cycle $C'$ in $G'_{A,O}$ is thus its cost in $G_{A,O}$ minus $c   |C'|$, which is greater than $\frac{cost_A(C')}{|C'|}   |C'| = cost_A(C')$, meaning there are no positive-cost cycles:
\begin{align*}
    &cost_{A}(C') - c|C'| \leq cost_{A}(C') - \frac{cost_A(C')}{|C'|} |C'| = \\
    &cost_A(C') - cost_A(C') = 0.
\end{align*}

\no{By \Cref{theorem:mefable-iff-cost-cycles}, there exists a payment vector $\price{}$ that eliminates all envy in $G'_{A,O}$. Specifically, for each agent $i \in N$, define $$p_i = \ell_i - \frac{\sum_{i\in N} \ell_i}{n},$$ where $\ell_i$ denotes the cost of the maximum-cost path in $G'_{A,O}$ originating from $i$. Moreover, such a vector can be computed in polynomial time: initially, the Floyd-Marshall algorithm 
(\citet{weisstein2008floyd, wimmer2017floyd}) is applied to the graph derived by negating all edge costs in $G'_{A,O}$ (This has a linear time solution since there are no cycles with positive costs in the graph). Hence, determining the longest path cost between any two agents, accomplished in $O(nm+n^{3})$ time. Subsequently, the longest path starting at each agent is identified in $O(n^{2})$ time.}
Using this payment vector, we have for each $i,j \in N$:
\begin{align*}
    & d_i(A_j,o_j,p_j) \leq d_i(A_i,o_i,p_i) + c \Longleftrightarrow d_i(A_j,o_j,p_j)  - d_i(A_i,o_i,p_i) \leq c.
\end{align*}
Hence, $ENVY_i(A,\price{}) \leq c$ for all $i\in N$.
\end{proof}

\begin{lemma} 
	\label[lemma]{lemm: smallest maximum envy}
Let $A$ be an assignment of the new apartments.
Then 
\begin{align*}
 \min_{\price{}} \max_{i\in N} ENVY_i(A,\price{}) = MACC(G_{A,O}),
\end{align*}
and a minimizing $\price$ can be computed in polynomial time.
\end{lemma}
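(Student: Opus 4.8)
The plan is to prove the equality by establishing the two inequalities separately, then note that the algorithmic claim follows from the construction used in one of the directions together with earlier-stated complexity facts.

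\medskip

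\noindent\textbf{Upper bound ($\le$).} This direction is already done by \Cref{lemm: envy bound mean cycle}: it exhibits a payment vector $\price$ (computable in polynomial time via Floyd--Warshall on the shifted, cycle-free graph $G'_{A,O}$) for which $ENVY_i(A,\price)\le c := MACC(G_{A,O})$ for every $i\in N$. Hence $\min_{\price}\max_{i\in N} ENVY_i(A,\price)\le MACC(G_{A,O})$, and the minimizing payment vector is exactly the one produced there. So for this half I would simply invoke \Cref{lemm: envy bound mean cycle}.

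\medskip

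\noindent\textbf{Lower bound ($\ge$).} Here I want to show that \emph{no} payment vector can push the maximum envy below $c$. Fix any payment vector $\price$ and let $C=(i_1,\dots,i_r,i_1)$ be a cycle achieving the maximum average cost $c$ in $G_{A,O}$. By \Cref{lemm: cycle cost unchanged}, the total cost of $C$ in $G_{A,\price}$ — equivalently, $\sum_{k} \bigl(d_{i_k}(A_{i_{k+1}},o_{i_{k+1}},p_{i_{k+1}}) - d_{i_k}(A_{i_k},o_{i_k},p_{i_k})\bigr)$, reading indices cyclically — equals $cost_{A,O}(C) = c\cdot r$, regardless of $\price$. (One has to be a little careful that the "cost of a cycle in the difference graph is payment-invariant" statement transfers to the envy expression $d_i(A_j,o_j,p_j)-d_i(A_i,o_i,p_i)$; but $d_i(A_j,o_j,p_j)-d_i(A_i,o_i,p_i) = (v_i(A_j)-v_i(o_j)+p_j) - (v_i(A_i)-v_i(o_i)+p_i) = cost_A(i,j)-cost_O(i,j)+ (p_j-p_i) = cost_{A,O}(i,j) + (p_j - p_i)$, and summing the $p_j-p_i$ terms around a cycle gives $0$, so indeed the cyclic sum is $cost_{A,O}(C)=c\cdot r$ for every $\price$.) Since this sum of $r$ envy-like terms equals $c\cdot r$, at least one term is $\ge c$; that is, there is an index $k$ with $d_{i_k}(A_{i_{k+1}},o_{i_{k+1}},p_{i_{k+1}}) - d_{i_k}(A_{i_k},o_{i_k},p_{i_k}) \ge c$, which gives $ENVY_{i_k}(A,\price)\ge c$. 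As $\price$ was arbitrary, $\min_{\price}\max_{i\in N} ENVY_i(A,\price)\ge c = MACC(G_{A,O})$.

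\medskip

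\noindent\textbf{Combining and algorithmic remark.} The two inequalities give the claimed equality. For the computational statement: $MACC(G_{A,O})$ can be computed in strongly polynomial time (Karp's algorithm \cite{karp1978characterization}), and then the polynomial-time construction of the minimizing $\price$ is precisely the one in the proof of \Cref{lemm: envy bound mean cycle} (shift all edge costs of $G_{A,O}$ down by $c$, run Floyd--Warshall on the negated graph to get longest-path costs $\ell_i$, set $p_i=\ell_i - \frac1n\sum_j \ell_j$). I expect the only delicate point to be the bookkeeping in the lower bound — making sure the averaging argument is applied to the right quantity (the cyclic sum of envy terms along $C$, not something slightly different) and that the payment-invariance lemma is applied in the form that actually matches that quantity; once the identity $d_i(A_j,o_j,p_j)-d_i(A_i,o_i,p_i) = cost_{A,O}(i,j) + (p_j-p_i)$ is written down explicitly, the rest is immediate.
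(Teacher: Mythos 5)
Your proof is correct and follows essentially the same route as the paper: the upper bound by invoking \Cref{lemm: envy bound mean cycle}, and the lower bound by combining the payment-invariance of cycle costs (\Cref{lemm: cycle cost unchanged}) with a pigeonhole argument on a maximum average-cost cycle. Your explicit verification that $d_i(A_j,o_j,p_j)-d_i(A_i,o_i,p_i) = cost_{A,O}(i,j) + (p_j - p_i)$, with the payments telescoping around the cycle, is a slightly more careful rendering of the step the paper states tersely, but it is the same argument.
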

\begin{proof}
Let $C$ be a maximum average-cost cycle in $G_{A,O}$, and let $c$ be its average cost.
~
By \Cref{lemm: cycle cost unchanged}, 
the cost of a cycle remains the same under any payment vector $\price{}$, and the same holds for the average cost. 
By the pigeonhole principle, the cycle has at least one edge with cost at least $c$; this means that, for any payment vector, the maximum envy in $A$ is at least $c$.
This proves the $\geq$ direction.

By \Cref{lemm: envy bound mean cycle}, a payment vector for which the envy of all agents is at most $c$ exists and can be computed in polynomial time; this proves the $\leq$ direction.
\end{proof}

To illustrate \Cref{lemm: smallest maximum envy}, consider \Cref{example:no-efable} again. 
There is only one directed cycle, namely $i_1\to i_2 \to i_1$; its cost in $G_{A,O}$ is $\epsilon + \delta$, so its average cost is $(\epsilon + \delta)/2$. Indeed, with the payment vector $p_1=(-\epsilon + \delta)/2,p_2=(\epsilon - \delta)/2$, the envy of agent 1 modifies from $\epsilon$ to $(\epsilon+\delta)/2$ and the envy of agent 2 modifies from $\delta$ to $(\epsilon+\delta)/2$, so both agents attain the envy bound. With every other payment vector, one agent would experience envy higher than $(\epsilon+\delta)/2$.

\no{As noted earlier, the MACC of any directed graph can be computed in strongly polynomial time.}
Hence, by \Cref{lemm: smallest maximum envy}, we can find $\price{}$ minimizing $\max_{i\in N} ENVY_i(A,\price{})$ in strongly-polynomial time.

However, minimizing the largest envy over all assignments is much more challenging, as by \Cref{lemm: smallest maximum envy}, it requires to find an assignment $A$ that minimizes $MACC(G_{A,O})$.
\begin{Open} \label{open question: macc}
Is there a polynomial-time algorithm that, given an assignment $O$ of the original apartments, computes an assignment $A$ of the new apartments that minimizes $MACC(G_{A,O})$?
\end{Open}

%
\section{Minimizing Disproportionality}
\label{section:proportionality}
Given an assignment $A$ and a payment vector $\price{}$, 
we define the \emph{total improvement} for agent $i$ as
\begin{align*}
	TOTAL_i := \sum_{j\in N}d_i(A_j,o_j,p_j).
\end{align*}
As $p$ is a balanced payment vector the payments cancel out, so an equivalent definition is $TOTAL_i = \sum_j (v_i(A_j) - v_i(o_j))$, which is simply the sum of all new apartments' values minus the sum of all old apartments' values in $i$'s eyes; hence, $TOTAL_i$ does not depend on $\price$ nor on $A$.

An allocation $(A,\price)$ is called \emph{proportional} if 
each agent enjoys at least a fraction $1/n$ of the total improvement, that is,  $d_i(A_i,o_i,p_i)\geq \frac{1}{n}TOTAL_i$ for all $i
\in N$.

An assignment $A$ is called PROP-able if there exists a payment vector $\price{}$ such that $(A,\price{})$ is proportional.

It is well-known in other fair division domains that envy-freeness implies proportionality, and when $n=2$ the opposite implication holds too. The same implications exist in our domain.
\begin{proposition}
(a) If an allocation $A$ and a payment vector $\mathbf{p}$ are envy-free, then they are also proportional.

(b) When $n=2$, if $(A,\mathbf{p})$ is proportional, then it is also envy-free.
\end{proposition}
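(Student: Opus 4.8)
The plan is to unwind the definitions and exploit the fact that $TOTAL_i = \sum_{j\in N} d_i(A_j,o_j,p_j)$ is literally a sum of the $n$ quantities that envy-freeness compares.

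For part (a), I would fix an agent $i$ and start from the envy-free inequalities $d_i(A_i,o_i,p_i)\ge d_i(A_j,o_j,p_j)$, which hold for every $j\in N$ by definition of EF. Summing these $n$ inequalities over $j$ gives $n\cdot d_i(A_i,o_i,p_i)\ge \sum_{j\in N} d_i(A_j,o_j,p_j)=TOTAL_i$, and dividing by $n$ yields $d_i(A_i,o_i,p_i)\ge \tfrac1n TOTAL_i$. Since $i$ was arbitrary, $(A,\price)$ is proportional.

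For part (b), with $N=\{1,2\}$, I would start from the proportionality inequality for agent $i$, namely $d_i(A_i,o_i,p_i)\ge \tfrac12 TOTAL_i = \tfrac12\bigl(d_i(A_1,o_1,p_1)+d_i(A_2,o_2,p_2)\bigr)$. Rearranging, $\tfrac12 d_i(A_i,o_i,p_i)\ge \tfrac12 d_i(A_{3-i},o_{3-i},p_{3-i})$, i.e.\ $d_i(A_i,o_i,p_i)\ge d_i(A_{3-i},o_{3-i},p_{3-i})$, which is exactly the statement that agent $i$ does not envy the other agent. Applying this for $i=1$ and $i=2$ gives envy-freeness.

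I do not anticipate a genuine obstacle here: both directions are one-line manipulations once the definition of $TOTAL_i$ as a sum over all bundles is used, and the $n=2$ converse is the usual ``proportional $\Rightarrow$ EF for two agents'' argument, unaffected by the presence of old endowments because the $d_i(\cdot)$ terms already incorporate them. The only point worth stating carefully is that the argument for part (b) genuinely uses $n=2$ (so that the average of the two terms lies exactly halfway between them); for $n\ge 3$ proportionality no longer pins down the pairwise comparisons, consistent with the standard fair-division picture.
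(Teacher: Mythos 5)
Your proposal is correct and is essentially the paper's own argument: part (a) is the observation that the maximum of the $n$ quantities $d_i(A_j,o_j,p_j)$ is at least their average $\tfrac1n TOTAL_i$ (you just write the max-vs-average step out as an explicit summation), and part (b) is the same two-term rearrangement the paper uses to show that being at least the average of two numbers forces being the maximum. No gaps.
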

\begin{proof}
(a) If $(A,\price)$ is envy-free, then by definition, for all $i\in N$,
 $$d_i(A_i,o_i,p_i) = \max_{j\in N} d_i(A_j,o_j,p_j).$$
As the maximum is always at least as high as the average, this implies $d_i(A_i,o_i,p_i)\geq TOTAL_i/n$, so $(A,\price)$ is proportional.

(b) When there are only two agents, if
$d_i(A_i,o_i,p_i)$ is at least as large as the average, then it must be maximum, so $(A,\price)$ is EF. 
\end{proof}

A PROP-able allocation may still not exist.
This is shown by the same \Cref{example:no-efable}, as in this example there are two agents.
Hence, as in the previous section, we aim to minimize the largest \emph{deviation} from proportionality, which we define by $DP_i(A,\price) := \frac{1}{n}TOTAL_i - d_i(A_i,o_i,p_i)$. 

We start by minimizing the largest disproportionality for a given assignment, that is, solving $\displaystyle \min_{\price{}} \max_{i\in N} DP_i(A,\price{})$.

We denote by $DP_i(A)$ the disproportionality of $i$ when the assignment is $A$ and all payments are $0$. 
Note that $DP_i(A) = \frac1n [\sum_{j\neq i} d_i(A_j,o_j) - (n-1) d_i(A_i,o_i)]$.

We denote the \emph{total disproportionality} of assignment $A$ by $\displaystyle DP_N(A) := \sum_{i\in N} DP_i(A)$.
\begin{lemma} \label{lem:smallest-disproportionality}
Let $A$ be an assignment of the new apartments.
Then 
\begin{align*}
 \min_{\price{}} \max_{i\in N} DP_i(A,\price{}) = \frac{DP_N(A)}{n},
\end{align*}
and the minimizing $\price$ is given by 
\begin{align} 
	\label{eq:dp-payments}
    p_i &= DP_i(A) - \frac{DP_N(A)}{n} 
\end{align}
\end{lemma}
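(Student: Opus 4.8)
The plan is to show both the lower-bound direction (no payment vector can push the maximum disproportionality below $DP_N(A)/n$) and the upper-bound direction (the explicit payment vector in \eqref{eq:dp-payments} attains this value), exactly as was done for envy in \Cref{lemm: smallest maximum envy}, but the argument here is cleaner because $TOTAL_i$ does not depend on $\price$.

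First I would record the key identity: for any balanced payment vector $\price$,
\begin{align*}
DP_i(A,\price) = \tfrac{1}{n}TOTAL_i - d_i(A_i,o_i,p_i) = \tfrac{1}{n}TOTAL_i - d_i(A_i,o_i) - p_i = DP_i(A) - p_i,
\end{align*}
using that $TOTAL_i$ is independent of $\price$ (noted just before the lemma) and that $d_i(A_i,o_i,p_i) = d_i(A_i,o_i)+p_i$ by quasilinearity. Summing over $i\in N$ and using $\sum_i p_i = 0$ gives $\sum_{i\in N} DP_i(A,\price) = DP_N(A)$ for every balanced $\price$; that is, the total disproportionality is invariant under payments (the analogue of \Cref{lemm: cycle cost unchanged}). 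The lower-bound direction is then immediate: by the pigeonhole/averaging principle, for any $\price$ some agent $i$ has $DP_i(A,\price) \geq DP_N(A)/n$, so $\max_{i\in N} DP_i(A,\price) \geq DP_N(A)/n$.

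For the upper bound I would plug the candidate $p_i = DP_i(A) - DP_N(A)/n$ into the identity above to get $DP_i(A,\price) = DP_i(A) - p_i = DP_N(A)/n$ for every $i\in N$, so the maximum is exactly $DP_N(A)/n$ and the bound is tight. I also need to check this $\price$ is admissible, i.e.\ balanced: $\sum_{i\in N} p_i = \sum_{i\in N} DP_i(A) - n\cdot DP_N(A)/n = DP_N(A) - DP_N(A) = 0$. Combining the two directions gives $\min_{\price}\max_{i\in N} DP_i(A,\price) = DP_N(A)/n$, with the minimizing $\price$ as claimed.

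There is essentially no obstacle here — the only thing to be careful about is the bookkeeping in the identity $DP_i(A,\price)=DP_i(A)-p_i$ and in confirming $\sum_i p_i=0$; everything else follows from averaging. I would present it in that order (invariance identity $\to$ averaging lower bound $\to$ explicit vector attains it $\to$ balancedness check), mirroring the structure of the envy section so the parallel with \Cref{lemm: smallest maximum envy} is transparent to the reader.
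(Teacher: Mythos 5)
Your proof is correct and takes essentially the same approach as the paper's: establish the identity $DP_i(A,\price{})=DP_i(A)-p_i$, deduce that $\sum_{i\in N}DP_i(A,\price{})=DP_N(A)$ is invariant, obtain the lower bound by averaging/pigeonhole, and check that the vector in \eqref{eq:dp-payments} is balanced and equalizes every $DP_i$ at $DP_N(A)/n$. One small point of scope: the paper derives the more general identity $DP_i(A,\price{})=DP_i(A)-p_i+\frac{1}{n}\sum_{j\in N}p_j$, which gives $\sum_{i\in N} DP_i(A,\price{})=DP_N(A)$ for \emph{any} payment vector, balanced or not; since the minimum in the statement ranges over all $\price{}$, your lower bound (derived only for balanced vectors) does not on its face exclude an unbalanced vector doing better, though the extension is immediate from the same computation because $TOTAL_i$ then shifts by exactly $\sum_{j\in N}p_j$.
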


\begin{proof}
When the payment to agent $i$ increases by $1$,
$d_i$ increases by $1$ whereas $TOTAL_i/n$ increases by $1/n$; hence the disproportionality of $i$ decreases by $(n-1)/n$ and the disproportionality of any other agent $j$ increases by $1/n$. Hence:
\begin{align}
\notag
	DP_i(A,\price)
	&=
	DP_i(A) - \frac{n-1}{n}p_i + \frac{1}{n}\sum_{j\neq i} p_j
\\
\label{eq:dpi}
	&=
	DP_i(A) - p_i + \frac{1}{n}\sum_{j\in N} p_j.
\end{align}
Hence, the sum $\displaystyle \sum_{i\in N} DP_i(A,\price) = DP_N(A)$ for any $\price$ (balanced or not).
By the pigeonhole principle, there must be an agent $i$ for whom $DP_i(A,\price) \geq \frac{DP_N(A)}{n}$. 
This proves the $\geq$ direction.

Now, let us compute the disproportionality of agent $i$ that when $\price$ is given by \eqref{eq:dp-payments}. As this $\price$ is balanced, \eqref{eq:dpi} gives 
\begin{align*}
	DP_i(A,\price{}) 
	& =
	DP_i(A) - p_i
	\\
	& =
	DP_i(A) - DP_i(A) + \frac{DP_N(A)}{n} = \frac{DP_N(A)}{n}.
\end{align*}
This proves the $\leq$ direction.
\end{proof}

\begin{toappendix}
\subsection{Explicit formula for the payment}
\label{eq:dp-payments-explicit}

The payment given to each agent $i$ by the Minimum Disproportionality mechanism is:
\begin{align*} 
	p_i &= DP_i(A) - \frac{DP_N(A)}{n} 
	\\
	&=
	\frac{1}{n^2}\bigg[
	(n-1)\sum_{j\neq i}d_i(A_j,o_j)
	- (n-1)^2 d_i(A_i,o_i) \\
	&- \sum_{j\neq i}\sum_{k\neq  j} d_j(A_k,o_k) + \sum_{j\neq i}(n-1)d_j(A_j,o_j)
	\bigg]    
	\\
	&=
	\frac{1}{n^2}\bigg[
	(n-1)\sum_{j\neq i}v_i(A_j)
	- (n-1)^2 v_i(A_i) \\
	&- \sum_{j\neq i}\sum_{k\neq  j} v_j(A_k) + \sum_{j\neq i}(n-1)v_j(A_j)
	\bigg]
	\\
&-
\frac{1}{n^2}\bigg[
(n-1)\sum_{j\neq i}v_i(o_j)
- (n-1)^2 v_i(o_i) \\
& - \sum_{j\neq i}\sum_{k\neq  j} v_j(o_k) + \sum_{j\neq i}(n-1)v_j(o_j)
\bigg].
\end{align*}
This can be written as    
\begin{align*}
p_i = \frac{1}{n^2} \Bigg[p^A_i - p^O_i\Bigg],
\end{align*}
where $p^A$ and $p^O$ represent the contributions of new and old apartments respectively:
\begin{align*}
p^A_i &= (n-1)\sum_{j\neq i}v_i(A_j)
- (n-1)^2 v_i(A_i)
- \sum_{j\neq i}\sum_{k\neq  j} v_j(A_k) + \sum_{j\neq i}(n-1)v_j(A_j)
\\
p^O_i &= (n-1)\sum_{j\neq i}v_i(o_j)
- (n-1)^2 v_i(o_i)
- \sum_{j\neq i}\sum_{k\neq  j} v_j(o_k) + \sum_{j\neq i}(n-1)v_j(o_j)
\end{align*}
\end{toappendix}

Note that the payment vector given by \eqref{eq:dp-payments} is balanced.

\Cref{lem:smallest-disproportionality} implies that,
in order to minimize the largest disproportionality over all assignments, one should find an assignment $A$ that minimizes $DP_N(A)$. 
We show below that this minimum is attained by any utilitarian-welfare-maximizing assignment.
\begin{lemma}
\label{lemma:utilitarian-minimizes-envy-sum}
If an assignment $A$ maximizes the utilitarian social welfare, then it minimizes $DP_N(A)$ over all assignments.

Hence, $A$ minimizes $\min_{\price{}} \max_{i\in N} DP_i(A,\price{})$.
\end{lemma}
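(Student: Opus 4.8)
The plan is to show that $DP_N(A)$ depends on the assignment $A$ only through the utilitarian welfare $\sum_{i\in N} v_i(A_i)$, and that decreasing the latter increases the former. Recall from the excerpt that $DP_i(A) = \frac{1}{n}\big[\sum_{j\neq i} d_i(A_j,o_j) - (n-1) d_i(A_i,o_i)\big]$, and $TOTAL_i = \sum_{j\in N} (v_i(A_j) - v_i(o_j))$ is independent of $A$. So I would first rewrite $DP_i(A) = \frac{1}{n}TOTAL_i - d_i(A_i,o_i)$ directly from the definition $DP_i(A) = \frac{1}{n}TOTAL_i - d_i(A_i,o_i,0)$, and then sum over $i$:
\begin{align*}
DP_N(A) = \sum_{i\in N}\Big(\tfrac{1}{n}TOTAL_i - d_i(A_i,o_i)\Big)
= \frac{1}{n}\sum_{i\in N} TOTAL_i - \sum_{i\in N}\big(v_i(A_i) - v_i(o_i)\big).
\end{align*}

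The first term $\frac{1}{n}\sum_{i\in N} TOTAL_i$ and the term $\sum_{i\in N} v_i(o_i)$ are both independent of the assignment $A$ (the old apartments are fixed). Hence $DP_N(A) = \text{const} - \sum_{i\in N} v_i(A_i)$, i.e. minimizing $DP_N(A)$ is exactly equivalent to maximizing the utilitarian social welfare $\sum_{i\in N} v_i(A_i)$. Therefore any utilitarian-welfare-maximizing assignment $A$ minimizes $DP_N(A)$ over all assignments. The second sentence of the lemma then follows immediately by combining this with \Cref{lem:smallest-disproportionality}, which states $\min_{\price{}}\max_{i\in N} DP_i(A,\price{}) = DP_N(A)/n$: since $A$ minimizes $DP_N(A)$ and division by $n>0$ is monotone, $A$ also minimizes $\min_{\price{}}\max_{i\in N} DP_i(A,\price{})$.

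I do not expect any serious obstacle here; the proof is essentially a one-line algebraic identity once one notices the cancellation. The only point requiring a little care is making sure the rewriting of $DP_i(A)$ as $\frac{1}{n}TOTAL_i - d_i(A_i,o_i)$ matches the definition given in the text (it does, taking $\price = 0$), and being explicit about which summands are assignment-independent — namely $\sum_i TOTAL_i$, because each $TOTAL_i$ ranges over all new and all old apartments regardless of who gets what, and $\sum_i v_i(o_i)$, because the endowment $o_i$ of each agent is fixed. One might also remark that this is precisely why the Minimum Disproportionality mechanism can reuse the utilitarian-maximization subroutine, as promised in the introduction.
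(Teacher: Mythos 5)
Your proof is correct and follows essentially the same route as the paper: both arguments reduce to the observation that $DP_N(A)$ equals an assignment-independent constant minus the utilitarian welfare $\sum_{i\in N} v_i(A_i)$, the paper phrasing this as a telescoping computation of $DP_N(A)-DP_N(B)$ for an arbitrary competitor $B$ while you state it as a single affine identity. The final step, invoking \Cref{lem:smallest-disproportionality} and monotonicity of division by $n$, matches the paper as well.
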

\begin{proof}
    Let $A$ and $B$ be assignments in which each agent receives exactly one item. Since $A$ maximizes the utilitarian social welfare, we have $\sum_{i\in N} v_i(A_i) \geq \sum_{i\in N} v_i(B_i)$.
~
Now, consider the difference in total disproportionality between $A$ and $B$:
\begin{align*}
    & DP_N(A) - DP_N(B) = 
    \sum_{i\in N} (DP_i(A) - DP_i(B)) = \\
    & \frac{1}{n} \sum_{i\in N} \left( \sum_{j\in N} v_i(A_j) - v_i(o_j) - v_i(A_i) + v_i(o_i) \right) - \\
    &
    \frac{1}{n} \sum_{i\in N} \left( \sum_{j\in N} v_i(B_j) - v_i(o_j) - v_i(B_i) + v_i(o_i) \right) = \\
    & 
    \frac{1}{n} \sum_{i\in N} \left( v_i(M) -  v_i(O) -n v_i(A_i) + n v_i(o_i) \right) - \\
    &
    \frac{1}{n} \sum_{i\in N} \left( v_i(M) - v_i(O) - n v_i(B_i) + n v_i(o_i) \right) = \\
    & 
    \sum_{i\in N} v_i(B_i) - \sum_{i\in N} v_i(A_i) \leq 0.
\end{align*}
Hence, $DP_N(A)\leq DP_N(B)$, which proves the lemma.
\end{proof}

Consequently, we propose a polynomial-time mechanism for Reconstruct and Divide projects that computes an assignment and payment vector minimizing the maximum disproportionality among all agents:
\begin{algorithm}
\caption{Minimum Disproportionality Mechanism
\label{def:min-envy-sum}
}
\begin{algorithmic}[1]
\State Find an assignment $A$ maximizing the utilitarian welfare;
\State Compute a payment vector $\price{}(A)$ by Equation \eqref{eq:dp-payments}.
\end{algorithmic}
\end{algorithm}

As a corollary, we get that the existence of proportional allocations can be decided in polynomial time.
\begin{Corollary} \label{min ES mechanism}
For $O$, the original apartments, and $M$, the new apartments, an assignment $A$ and a balanced payment vector $\price{}(A)$ such that $(A,\price{}(A))$ is proportional can be found, or it can be determined that no such assignment and vector exist, all in polynomial time.
\end{Corollary}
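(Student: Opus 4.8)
The plan is to show that the Minimum Disproportionality Mechanism (Algorithm~\ref{def:min-envy-sum}) itself settles the question, after one extra comparison at the end. First I would run the two steps of the mechanism: compute a utilitarian-welfare-maximizing assignment $A$, and then the balanced payment vector $\price{}(A)$ given by Equation~\eqref{eq:dp-payments}. Step~1 is an instance of the maximum-weight bipartite matching (assignment) problem, solvable in strongly polynomial time, e.g.\ by the Hungarian algorithm; step~2 only requires evaluating $DP_i(A)$ for each $i$ and $DP_N(A)=\sum_{i\in N} DP_i(A)$, which is polynomially many additions and subtractions of the input values. Hence the whole computation runs in (strongly) polynomial time.

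Next I would argue correctness. By \Cref{lem:smallest-disproportionality}, for the computed assignment $A$ the payment vector $\price{}(A)$ achieves $\max_{i\in N} DP_i(A,\price{}(A)) = DP_N(A)/n$, and this is the smallest possible value of $\max_{i\in N} DP_i(A,\cdot)$ over all payment vectors for this particular $A$. By \Cref{lemma:utilitarian-minimizes-envy-sum}, since $A$ maximizes utilitarian welfare it also minimizes $DP_N(\cdot)$, hence minimizes $\min_{\price{}}\max_{i\in N} DP_i(\cdot,\price{})$ over all assignments. Combining the two, $DP_N(A)/n$ equals the globally smallest attainable maximum disproportionality, $\min_{B}\min_{\price{}}\max_{i\in N} DP_i(B,\price{})$. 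Now observe that an allocation $(B,\price{})$ is proportional exactly when $DP_i(B,\price{})\le 0$ for every $i$, i.e.\ when $\max_{i\in N} DP_i(B,\price{})\le 0$ (recall $TOTAL_i$, and hence proportionality, does not depend on $\price{}$ or on $A$). Therefore a proportional allocation exists if and only if this global minimum is $\le 0$, i.e.\ if and only if $DP_N(A)\le 0$.

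So the algorithm is: run the mechanism, compute $DP_N(A)$; if $DP_N(A)\le 0$, output $(A,\price{}(A))$, which by the above is proportional and, as already noted, uses a balanced payment vector; otherwise report that no assignment together with a balanced payment vector yields a proportional allocation. The only point needing care is the ``otherwise'' branch, since it rules out \emph{every} other assignment and \emph{every} payment vector at once; but this is precisely what the combination of \Cref{lem:smallest-disproportionality} and \Cref{lemma:utilitarian-minimizes-envy-sum} delivers, so there is no real obstacle beyond assembling these pieces — the mild subtlety is merely to phrase proportionality of $(B,\price{})$ as the statement that the maximum disproportionality is nonpositive.
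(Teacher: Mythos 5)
Your proposal is correct and follows essentially the same route as the paper: run the Minimum Disproportionality mechanism, then use \Cref{lem:smallest-disproportionality} together with \Cref{lemma:utilitarian-minimizes-envy-sum} to conclude that the attained maximum disproportionality is globally minimal, so a proportional allocation exists if and only if that value is nonpositive (the paper tests $\max_i DP_i(A,\price)\le 0$, you test the equivalent condition $DP_N(A)\le 0$). The only nit is the parenthetical claiming that ``proportionality does not depend on $\price$ or on $A$'' --- it is only $TOTAL_i$ that is independent of them --- but this aside is not load-bearing and the argument stands.
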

\begin{proof}
Use \Cref{def:min-envy-sum} to compute $(A,\price)$.
Then compute $$\max_{i\in N} DP_i(A,\price{}).$$
~
If $\max_{i\in N} DP_i(A,\price{})\leq 0$, then $(A,\price)$ is proportional by definition.
~
Otherwise, no proportional allocation exists, as by \Cref{lemma:utilitarian-minimizes-envy-sum}, the attained disproportionality $\max_{i\in N} DP_i(A,\price{})$ is the smallest possible, so there is no other assignment $A'$ with $\max_{i\in N} DP_i(A',\price{})\leq 0$.
\end{proof}

\section{Strategic Manipulations} \label{sec: manipulations}

The Minimum Disproportionality Mechanism (\Cref{def:min-envy-sum}) takes the agents' valuations as input. Ideally, we would like the agents to report their true valuations. 

Formally, a \emph{manipulation} for a mechanism $\mathcal{M}$ by an agent $i \in N$ is an untruthful report $v_i'\neq v_i$. 
A manipulation is \emph{profitable} if there exists a set of reports $v_{-i}$ from the other agents such that the agent gains a higher utility than by misreporting:
\begin{align}
\label{eq:profitable}
\exists v_{-i} : v_i(\mathcal{M}(v_i', v_{-i})) > v_i(\mathcal{M}(v_i, v_{-i})) .
\end{align}
A mechanism $\mathcal{M}$ is \emph{truthful} if no agent has a profitable manipulation.

It is well-known that, even in the setting with no old apartments, no deterministic mechanism is truthful, budget-balanced and satisfies even weak fairness conditions
\citep{green1979coalition,zhou1990conjecture,dufton2011randomised}. In particular, if some agent $i$ wins some apartment $A_i$ when reporting truthfully, then a profitable manipulation for agent $i$ is to report a slightly lower value for $A_i$, as in some cases the assignment will not change but the payment for agent $i$ will increase.
However, such manipulations are usually \emph{unsafe}, as in some cases, reporting a lower value for $A_i$ might make the algorithm choose a different assignment, so agent $i$ would receive a worse apartment for a higher price.
Therefore, one can hope that  agents will not manipulate their valuations.

But with old apartments the situation is much worse: for each agent $i$, reporting a higher value for his old apartment $o_i$ is both profitable and \emph{safe}, as it has no effect on the assignment of new apartments, but it strictly increases $DP_i(A)$ and hence increases $p_i$ by \eqref{eq:dp-payments}. Therefore, even  agents will most probably manipulate their valuations.

Formally, we say that a manipulation is \emph{safe} if it never results in a worse outcome for the manipulator—i.e., the agent weakly prefers it over truthfulness for any possible reports of the other agents:
\begin{align}
\label{eq:safe}
\forall v_{-i} : v_i(\mathcal{M}(v_i', v_{-i})) \geq v_i(\mathcal{M}(v_i, v_{-i})).
\end{align}
A mechanism $\mathcal{M}$ is \emph{safely manipulable} if some agent has a manipulation that is both profitable and safe. Otherwise, $\mathcal{M}$ is \emph{Risk-Avoiding Truthful (RAT)} \citep{buOnExistence,hartman2025s}.

\subsection{Valuations based on Characteristics}
As \Cref{def:min-envy-sum} is not even  truthful,
we aim to improve its strategic properties by changing the method for eliciting agents' valuations. 
Instead of asking each agent to evaluate each apartment, we ask them to assess the value of shared \emph{characteristics} found in both old and new apartments, such as: directions of exposure, floor, parking, etc. This approach is inspired by the way actual real-estate appraisers compute the value of an estate.
Now, if an agent wants to increase the value of his old apartment, he has to increase the value of some of its characteristics; however, these characteristics might also be present in some new apartments, which might affect the assignment in a way that decreases the manipulator's utility.

Formally, we assume that there is a fixed set $T$ of potential apartment characteristics, such as floor level, parking availability, airflow direction, and natural light. Each agent assigns a score to each characteristic. 
For an apartment $a$ and a characteristic $t \in T$, define the indicator variable $\beta_{t,a}$, such that%
:
\[
\beta_{t,a} =
\begin{cases}
     & 1, \text{if apartment } a \text{ possesses characteristic } t \\
    & 0, \text{otherwise}
\end{cases}
\]

We consider two ways to aggregate the values of characteristics.

\paragraph{(1) Additive Characteristics.}
Here we assume that the value of an apartment is the sum of the values of its characteristics. 

For each agent $i\in N$ and each characteristic $t \in T$, let $\alpha_{i,t}$ represent the value assigned by agent $i$ to characteristic $t$. 
The valuation of apartment $a$ for agent $i$ is then given by
$v_{i}(a) = \sum_{t \in T}\alpha_{i,t}\cdot \beta_{t,a}.
$
See \Cref{exmample: sum of characteristics} for an example.

\paragraph{(2) Multiplicative Characteristics.}
Here we assume that each apartment has a base price determined by its size. Each agent specifies the percentage of the apartment's base price that they attribute to each characteristic.

For each agent $i\in N$ and each characteristic $t \in T$, let 
$\theta_{i,t}$ be the percentage assigned by agent $i$ to characteristic $t$, and let 
$\rho_{i,t} := 1 + \frac{\theta_{i,t}}{100}$.
Note that a characteristic might have a negative effect (e.g. some people do not like apartments in high floors). In that case we set $\theta_{i,t}$ to a negative amount between $-100$ and $0$, and get $0<\rho_{i,t}< 1$.
Equivalently, we could say $\theta$ is between 0 and 100 and define $\rho_{i,t}=1-\theta_{i,t}/100$.

For each apartment $a$, let $\phi_a$ denote its fixed base price, computed as the product of the apartment size in square meters, by the market price per square meter in the region.
The valuation of apartment $a$ for agent $i$ is then given by:
$
v_i(a) = 
\phi_a \prod_{t\in T}\rho_{i,t}^{\beta_{t,a}}.
$
The multiplicative method is closer to the one actually used by 
appraisers to determine the value of an apartment in Reconstruct and Divide projects.
~
See \Cref{example: Multiplicative Characteristics} for an example.

\subsection{Manipulations of Minimum Disproportionality mechanism}
Switching to characteristic-based evaluations does not guarantee that the Minimum Disproportionality mechanism is RAT. For instance, if agent $i$'s old apartment has a unique characteristic $t$, increasing $v_i(t)$ safely increase their payment without affecting the assignment (\Cref{safely manipulable: old apartment unique characteristics}). Similarly, if $t$ exists in both $o_i$ and all new apartments, $p_i$ still increases while the assignment remains unchanged (see \Cref{safely manipulable: same characteristics}).

However, it is possible that the characteristics of the new apartment are not all known to the agents, as the valuations are elicited from the agents before the new apartments are even built. To handle this issue, we extend the definition of a safe manipulation \eqref{eq:safe} to require that the manipulation is not harmful for the agent for any combination of characteristics of the new apartments. 

We show that, in this case, the Minimum Disproportionality mechanism has no safe manipulations.

For brevity, throughout this section we denote the true valuation vector $(v_1,v_2,\ldots, v_n)$ as $v$, and the manipulated valuation vector, where only agent 1 misreports, as $v' = (v_1',v_2,\ldots, v_n)$.
We use $\price{}(v)$ to represent the payment vector under the valuation profile $v$, and let $DP(A,v)$ denote the total disproportionalitycorresponding to assignment $A$ and valuation vector $v$.
Let $T(a)$ denote the set of characteristics of apartment $a$.

\begin{toappendix}
\subsection{Example: Additive Characteristics}
\label{exmample: sum of characteristics}
\begin{Example} 
    Consider 2 agent and 4 apartment's characteristics: direction of airflow, natural light, parking availability and floor level higher than 5.
    The values assigned by the agents to these characteristics are as follows:
    \[
    \begin{bmatrix}
        & &\makecell{\textbf{Direction Of} \\ \textbf{Airflow}} & \makecell{\textbf{Natural} \\ \textbf{Light}} & \makecell{\textbf{Parking}\\ \textbf{Availability}} & \makecell{\textbf{Floor Level} \\ \textbf{$>$ 5}} \\
        &\textbf{Agent 1} & 100 & 50 & 140 & 60 \\
        &\textbf{Agent 2} & 75 & 100 & 150 & 100
    \end{bmatrix}
    \]
    The indicator variables for the old apartments are defined as follows:
    \[
    \begin{bmatrix}
        & &\makecell{\textbf{Direction Of} \\ \textbf{Airflow}} & \makecell{\textbf{Natural} \\ \textbf{Light}} & \makecell{\textbf{Parking}\\ \textbf{Availability}} & \makecell{\textbf{Floor Level} \\ \textbf{$>$ 5}} \\
        &\textbf{$o_1$} & 1 & 0 & 0 & 1 \\
        &\textbf{$o_2$} & 1 & 1 & 0 & 1
    \end{bmatrix}
    \]
    

    
    The indicator variables for the new apartments are defined as follows:
    \[
    \begin{bmatrix}
        & &\makecell{\textbf{Direction Of} \\ \textbf{Airflow}} & \makecell{\textbf{Natural} \\ \textbf{Light}} & \makecell{\textbf{Parking}\\ \textbf{Availability}} & \makecell{\textbf{Floor Level} \\ \textbf{$>$ 5}} \\
        &\textbf{$a_1$} & 0 & 1 & 1 & 1 \\
        &\textbf{$a_2$} & 1 & 1 & 1 & 0
    \end{bmatrix}
    \]
    
    For example, apartment $a_1$ possesses the characteristics \textit{natural light}, \textit{parking availability} and \textit{floor level higher than 5}.
    The value assigned by agent 1 to the new apartment $a_1$ is the sum of the values agent 1 assigns to these characteristics is $$v_1(a_1) = 50 + 140 + 60 = 250.$$ 
%
 Similar calculations yield the following  values (in units) for each apartment:
    \[
    \begin{bmatrix}
        &                  & o_1 & o_2 & a_1 & a_2 \\
        & \textbf{Agent 1} & 160 & 210 & 250 & 290 \\
        & \textbf{Agent 2} & 175 & 275 & 350 & 325 \\
    \end{bmatrix}
    \]
This matrix can be used as an input to an assignment algorithm, such as the Minimum Disproportionality mechanism analyzed in \Cref{section:proportionality}.
\end{Example}

\subsection{Example: Multiplicative Characteristics} \label{example: Multiplicative Characteristics}
\begin{Example}
    We use the same characteristics as in \Cref{exmample: sum of characteristics}.
    The $\rho_{i,t}$ factors  assigned by the agents to the characteristics are as follows:
    \[
    \begin{bmatrix}
        & &\makecell{\textbf{Direction Of} \\ \textbf{Airflow}} & \makecell{\textbf{Natural} \\ \textbf{Light}} & \makecell{\textbf{Parking}\\ \textbf{Availability}} & \makecell{\textbf{Floor Level} \\ \textbf{$>$ 5}} \\
        &\textbf{Agent 1} & 1.05 & 1.03 & 1.07 & 1.06 \\
        &\textbf{Agent 2} & 1.04 & 1.05 & 1.08 & 1.05
    \end{bmatrix}
    \]

    The base prices of the apartments, are as follows:

\[
    \begin{bmatrix}
        & & \makecell{\textbf{base price}} \\
        & \textbf{$o_1$} & $160$ \\
        & \textbf{$o_2$} & $170$ \\
        & \textbf{$a_1$} & $210$ \\
        & \textbf{$a_2$} & $220$
    \end{bmatrix}
\]
    
    For example, apartment $a_1$ possesses the characteristics \textit{natural light}, \textit{parking availability} and \textit{floor level higher than 5}.
    The value assigned by agent 1 to the new apartment $a_1$ is $$v_1(a_1) = 210 \cdot 1.03 \cdot 1.07 \cdot 1.06 = 245.32$$ 
%
 %
%

    Similar calculations yield the following total values (in units) for each apartment:
    \[
    \begin{bmatrix}
        &                  & o_1 & o_2 & a_1 & a_2 \\
        & \textbf{Agent 1} & 178.08 & 194.86 & 245.32 &  254.58 \\
        & \textbf{Agent 2} & 174.42 & 194.91 & 250.04 & 259.45   \\
    \end{bmatrix}
    \]
    This matrix can be used as an input to an assignment algorithm, such as the Minimum Disproportionality mechanism analyzed in \Cref{section:proportionality}.
\end{Example}
\begin{proposition}
\label{safely manipulable: old apartment unique characteristics}
When old apartments have unique characteristics that new apartments lack, the Minimum Disproportionality mechanism
becomes safely manipulable. 

This is true both for additive and for multiplicative characteristics.
\end{proposition}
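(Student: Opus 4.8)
The plan is to exhibit one explicit manipulation — inflating the reported value of the characteristic unique to the manipulator's old apartment — and verify it is at once profitable and safe. Fix an agent $i$ whose old apartment $o_i$ possesses a characteristic $t$ that no other apartment, old or new, possesses; this is precisely the situation in the proposition. In the additive model let $v_i'$ be the report agreeing with $v_i$ except that $\alpha_{i,t}' > \alpha_{i,t}$; in the multiplicative model let $v_i'$ agree with $v_i$ except that $\rho_{i,t}' > \rho_{i,t}$ (equivalently, a larger $\theta_{i,t}$). Both are legal reports, and since $\beta_{t,a}=0$ for every apartment $a\neq o_i$, the only value that moves is $v_i(o_i)$, which strictly increases by some $\Delta>0$ (in the multiplicative case one uses that the base price $\phi_{o_i}$ and all factors $\rho_{i,s}$ are positive).

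First I would show the reported assignment is unaffected. Each other agent $j$ computes all of its valuations from its own scores $\alpha_{j,\cdot}$ (resp.\ $\rho_{j,\cdot}$), so none of $v_j(o_k)$ or $v_j(a_k)$ changes; and $v_i'(a)=v_i(a)$ for every new apartment $a$. Since Step~1 of \Cref{def:min-envy-sum} maximizes $\sum_{j}v_j(A_j)$, which depends only on the new-apartment valuations, the collection of utilitarian-welfare-maximizing assignments is identical under $v$ and $v'$, hence so is the assignment $A$ output by the mechanism — for every profile $v_{-i}$ of the other agents.

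Next I would trace the payment. From $DP_i(A)=\frac1n\bigl[\sum_{j\neq i}d_i(A_j,o_j)-(n-1)d_i(A_i,o_i)\bigr]$, and since raising $v_i(o_i)$ by $\Delta$ only lowers $d_i(A_i,o_i)=v_i(A_i)-v_i(o_i)$ by $\Delta$, $DP_i(A)$ grows by exactly $\tfrac{n-1}{n}\Delta$; every $DP_j(A)$ with $j\neq i$ is built solely from agent $j$'s unchanged valuation and so is constant, whence $DP_N(A)$ also grows by $\tfrac{n-1}{n}\Delta$. Substituting into \eqref{eq:dp-payments}, $p_i=DP_i(A)-\tfrac1n DP_N(A)$ increases by $\tfrac{n-1}{n}\Delta-\tfrac1n\cdot\tfrac{n-1}{n}\Delta=\tfrac{(n-1)^2}{n^2}\Delta>0$ for $n\ge 2$. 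Because $A_i$ (and hence $v_i(A_i)$) is unchanged while $p_i$ strictly grows, agent $i$'s utility $v_i(A_i)+p_i$ strictly increases, and this occurs for every $v_{-i}$; so the manipulation meets both \eqref{eq:profitable} and \eqref{eq:safe}, giving safe manipulability in both the additive and the multiplicative models.

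The only point that needs care is the assignment-invariance claim: if the mechanism used data beyond the new-apartment valuations to break ties among welfare-maximizers, a manipulation could conceivably move the chosen assignment. I would dispatch this by observing that the utilitarian objective refers only to the new apartments, so it is natural — and is the case for our mechanism — that ties are broken using only that data; failing that, one may simply choose an instance whose welfare-maximizing assignment is unique, which already establishes the proposition.
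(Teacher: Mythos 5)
Your proposal is correct and follows essentially the same route as the paper: inflate the reported value of the characteristic unique to the manipulator's old apartment, observe that the utilitarian-welfare-maximizing assignment depends only on the (unchanged) new-apartment valuations, and show that $p_i$ strictly increases via the payment formula, so the manipulation is simultaneously profitable and safe. Your bookkeeping of the increase ($DP_i$ and $DP_N$ each grow by $\tfrac{n-1}{n}\Delta$, so $p_i$ grows by $\tfrac{(n-1)^2}{n^2}\Delta$) is in fact cleaner than the paper's, and the tie-breaking remark is a worthwhile addition the paper leaves implicit.
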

\begin{proof}
 We consider $n$ agents with arbitrary valuation functions and $T$ characteristics, where $t_1\in T$ applies only to $o_1$. Without loss of generality, the assignment that maximizes social welfare is $A_i = \{a_i\}$ for each $i\in N$.

 As shown above, the payment for each agent $i$ is given by:
     $$p_i = \frac{n-1}{n^2} DP_i(A) - \sum_{j\neq i \in N}\frac{ 1}{n^2}DP_j(A).$$
     Agent $1$ can benefit by misreporting their valuation, specifically by increasing the valuation of $t_1$. We denote the misreported valuation as $v'$. Since $t_1$ does not exist in the new apartments, this does not affect the assignment but does influence the payments in favor of agent 1:
     
     \begin{align*}
         & p(v')_1 = \frac{n-1}{n^2}DP_1(A,v') - \frac{1}{n^2}\sum_{j\neq 1\in N} DP_j(A,v') = 
         \\
         &
         \frac{n-1}{n^2}DP_1(A,v') - \frac{1}{n^2}\sum_{j\neq 1\in N} DP_j(A,v) =
         \\
         &
         \frac{n-1}{n^2}\left(DP_1(A,v) + n(v_1'(o_1) - v_1(o_1) ) \right) 
         - \frac{1}{n^2}\sum_{j\neq 1\in N} DP_j(A,v).
         \end{align*}
    Because $n^2> n-1\geq 1$ and $v_1(o_1) < v_1'(o_1)$ we receive that
         \begin{align*}
         p(v')_1 >
         \frac{n-1}{n^2}DP_1(A,v) - \frac{1}{n^2}\sum_{j\neq 1\in N} DP_j(A,v) = p(v)_1.
     \end{align*}
     Since the new assignment is not affected by the misreporting, no matter what the other agents report, agent 1 can benefit by misreporting their valuation, proving that the mechanism is safely manipulable.

     The proof for multiplicative characteristics is similar.
\end{proof}

\begin{proposition}
\label{safely manipulable: same characteristics}
Under additive characteristics,
when old apartments have a characteristic that appears in all new apartments, the Minimum Disproportionality mechanism becomes safely manipulable.
In contrast, this does not hold under multiplicative characteristics.
\end{proposition}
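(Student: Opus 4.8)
The plan is to handle the two models separately, in each case looking at the natural manipulation in which agent~$1$ inflates her reported value for the shared characteristic~$t$ — raising $\alpha_{1,t}$ in the additive model, raising $\theta_{1,t}$ (equivalently multiplying $\rho_{1,t}$ by some $c>1$) in the multiplicative model — and exploiting the fact that an additive shift on the new apartments cancels in a way the multiplicative scaling does not. For the additive direction I would fix any instance in which a characteristic~$t$ lies in $o_1$, in every new apartment $a_1,\dots,a_n$, and outside at least one other old apartment (this is generic and covered by the hypothesis). First I would observe that raising $\alpha_{1,t}$ by $\Delta>0$ adds exactly $\Delta$ to $v_1(a_j)$ for every $j$, hence adds the same constant $\Delta$ to the utilitarian welfare of every assignment; therefore the welfare-maximizing assignment selected in Step~1 of \Cref{def:min-envy-sum} does not change, and agent~$1$ keeps the same apartment, whose value for her utility is computed with her true valuation. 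Then I would track the payments through \eqref{eq:dp-payments}: for $j\neq1$ the quantity $DP_j(A)$ depends only on $v_j$ and is unchanged, while $DP_1(A)$ strictly increases, because each cross-term $d_1(A_j,o_j)$ with $j\neq1$ grows by $\Delta$ whereas the own-term $d_1(A_1,o_1)=v_1(A_1)-v_1(o_1)$ is unchanged, the increment to $v_1(A_1)$ being cancelled by the increment to $v_1(o_1)$. A one-line computation then yields that $DP_1(A)$ grows by $\tfrac{n-1}{n}\Delta$ and hence, via \eqref{eq:dp-payments}, that $p_1$ grows by $\tfrac{(n-1)^2}{n^2}\Delta>0$, a quantity independent of the other agents' reports. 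Thus for every $v_{-1}$ agent~$1$'s apartment is unchanged and her payment strictly increases, so the manipulation is strictly profitable for all reports of the others — in particular safe and profitable — and the mechanism is safely manipulable.

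For the multiplicative direction I would exhibit a concrete two-agent, single-characteristic instance satisfying the same hypothesis yet \emph{not} safely manipulable, which directly negates the implication. Take $t$ present in $o_1$ and in both new apartments but not in $o_2$, and choose base prices with $\phi_{a_1}>\phi_{a_2}$ (so both agents prefer $a_1$) and $\phi_{a_1}>\phi_{a_2}+\phi_{o_1}$ (so that, whenever agent~$1$ is assigned $a_1$, $v_1(a_1)-v_1(o_1)>v_1(a_2)$); agent~$1$'s only lever is the reported $\rho_{1,t}$. I would then rule out both one-sided manipulations. If agent~$1$ raises $\rho_{1,t}$: against any $v_2$ for which she is already assigned $a_1$, the assignment does not move, but the scaling multiplies $v_1(A_1)-v_1(o_1)$ by $c>1$ (it is scaled, not shifted), which strictly \emph{decreases} $DP_1(A)$ — hence $p_1$ via \eqref{eq:dp-payments} — precisely because $v_1(a_1)-v_1(o_1)>v_1(a_2)$; her apartment is unchanged, so she is strictly worse off and the manipulation is not safe. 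If agent~$1$ lowers $\rho_{1,t}$: against any $v_2$ with $\rho_{2,t}$ strictly between the reported value and the true $\rho_{1,t}$, the welfare-maximizing assignment flips so that agent~$1$ receives $a_2$ instead of $a_1$, and a short calculation comparing utilities under both assignments (using \eqref{eq:dp-payments} for the payments in each case) shows her total utility strictly drops, so this manipulation is not safe either. Since every non-truthful report of agent~$1$ fails to be safe, the instance is not safely manipulable.

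The additive direction is a routine substitution into \eqref{eq:dp-payments}; the real work is the multiplicative counterexample. The main obstacle is to choose the base prices so that \emph{simultaneously} (i) the welfare-maximizing assignment stays put when $\rho_{1,t}$ is raised — which forces $a_1$ to be sufficiently dominant for agent~$1$ — and (ii) $v_1(a_1)-v_1(o_1)>v_1(a_2)$, so that scaling the own-improvement term drags $DP_1$ (and thus $p_1$) downward; and then to verify that in the "assignment flips" case the loss of the better apartment is not over-compensated by the change in payment. Because the Minimum Disproportionality mechanism equalizes disproportionality across agents, the payment move nearly cancels the apartment change, so the final inequality is tight and must be checked carefully — the condition $\phi_{a_1}>\phi_{a_2}+\phi_{o_1}$ is exactly what makes the relevant difference strictly negative rather than zero.
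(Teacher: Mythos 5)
Your proposal is correct, and the additive half is essentially the paper's own argument: inflate the value of the shared characteristic, observe that every new apartment's value shifts by the same constant so the welfare-maximizing assignment is invariant, and track the resulting strict increase of $DP_1(A)$ (hence of $p_1$) through \eqref{eq:dp-payments}. One small arithmetic caveat: your stated increment $\tfrac{n-1}{n}\Delta$ for $DP_1(A)$ presumes the characteristic is absent from \emph{all} other old apartments (which is what the paper assumes), whereas your hypothesis only requires it to be absent from at least one; in general the increment is $\tfrac{k}{n}\Delta$ with $k$ the number of other old apartments lacking the characteristic, and the payment gain is $\tfrac{k(n-1)}{n^2}\Delta$, still strictly positive, so nothing breaks. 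Where you genuinely depart from the paper is the multiplicative half. The paper exhibits a concrete numerical instance with two agents and six characteristics and checks, for the single manipulation of the shared characteristic $t_1$, that both branches (assignment unchanged versus assignment flipped) lose utility. You instead build a parametric single-characteristic instance and isolate the structural condition $\phi_{a_1}>\phi_{a_2}+\phi_{o_1}$ under which raising $\rho_{1,t}$ scales the own-improvement term enough to drag $DP_1$ (and $p_1$) down; I verified that your flip case also works: writing $D=\phi_{a_1}-\phi_{a_2}>0$, the utility change equals $\tfrac{1}{4}\left[D(\rho_1'-3\rho_1+2\rho_2)+\phi_{o_1}(\rho_1'-\rho_1)\right]$, which is strictly negative whenever $\rho_1'<\rho_2<\rho_1$, so that branch in fact needs no extra condition. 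Your version buys two things over the paper's: with a single characteristic, \emph{every} possible misreport of agent $1$ is a perturbation of $\rho_{1,t}$, so the counterexample rules out all manipulations rather than only manipulations of the shared characteristic (a completeness point the paper's six-characteristic example leaves implicit); and the explicit condition on base prices makes transparent why the multiplicative model behaves differently --- scaling, unlike shifting, does not cancel in the own-improvement term. The paper's numerical example is, in exchange, fully self-contained and immediately checkable.
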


\begin{proof}
\underline{\bf Additive Characteristics:}
     We consider $n$ agents with arbitrary valuation functions and $T$ characteristics, where $t_1$ applies to $o_1, a_1, \ldots, a_n$ but not to $o_2,\ldots, o_n$. Without loss of generality, the assignment that maximizes social welfare is $A_i = \{a_i\}$ for each $i\in N$.

    As shown above, the payment for each agent $i$ is given by:
     $$p_i = \frac{n-1}{n^2}DP_i(A) - \frac{1}{n^2}\sum_{j\neq i\in N} DP_j(A).$$
     Agent $1$ can benefit by misreporting their valuation, specifically by increasing the valuation of $t_1$ by $v_1'(t_1) > 0$. 
     Note that the envy of agent 1 towards agent $j\neq 1 \in N$ is 
     \begin{align*}
         & v_1'(A_j) - v_1'(o_j) - v_1'(A_1) + v_1'(o_1) =\\
         &v_1(A_j) + v_1'(t_1) - v_1(o_j) - v_1(A_1) - v_1'(t_1) + v_1'(o_1) + v_1'(t_1).
     \end{align*}
     
     Since $t_1$ holds for every new apartment, increasing the value of $t_1$ results in increasing the values of all new apartments by the same amount, so the maximum-value assignment does not change. However,  the manipulation does affect the payments in favor of agent 1:
     \begin{align*}
         & p(v')_1 = \frac{n-1}{n^2}DP_1(A,v') - \frac{1}{n^2}\sum_{j\neq 1\in N} DP_j(A,v') = \\
         & \frac{n-1}{n^2}DP_1(A,v') - \frac{1}{n^2}\sum_{j\neq 1\in N} DP_j(A,v) = \\
         & \frac{n-1}{n^2}\left(DP_1(A,v) + \sum_{j\neq 1\in N} \left( v_1'(t_1) - v_1'(t_1) + v_1'(t_1) \right) \right) - \\
         &\frac{1}{n^2}\sum_{j\neq 1\in N} DP_j(A,v) > \\
         & \frac{n-1}{n^2}DP_1(A,v) - \frac{1}{n^2}\sum_{j\neq 1\in N} DP_j(A,v) = p(v)_1.
     \end{align*}
     Since the misreporting does not change the assignment, regardless of what the other agents report, agent 1 can still gain an advantage, demonstrating that the mechanism is safely manipulable.

\end{proof}

\underline{\bf Multiplicative Characteristics:}
We show a special case in which the above proof does not work.
Consider two agents, $ i_1 $ and $ i_2 $, and six characteristics, $ t_1, t_2, t_3, t_4, t_5, t_6 $. The values assigned by the agents to these characteristics are as follows:
\[
    \begin{bmatrix}
        & & t_1 & t_2 & t_3 & t_4 & t_5 & t_6\\
        &i_1 & 2 & 3 & 5 & 3 & 1 & 3 \\
        &i_2 & 1 & 2 & 2 & 2 & 4 & 8
    \end{bmatrix}
\]

The indicator variables for the apartments are defined as follows:
\[
    \begin{bmatrix}
        & & o_1 & o_2 & a_1 & a_2 \\
        &t_1 & 1 & 0 & 1 & 1 \\
        &t_2 & 0 & 1 & 0 & 0 \\
        &t_3 & 0 & 0 & 1 & 0 \\
        &t_4 & 0 & 0 & 0 & 1 \\
        &t_5 & 0 & 0 & 1 & 0 \\
        &t_6 & 0 & 0 & 0 & 1 
    \end{bmatrix}
\]
The base prices of the apartments are as follows:
\[
    \begin{bmatrix}
        & & \makecell{\text{Base Price}} \\
        & o_1 &  1 \\
        & o_2 &  1 \\
        & a_1 &  1 \\
        & a_2 &  1
    \end{bmatrix}
\]
That is, the total values for each apartment are as follows:
\[
    \begin{bmatrix}
        & & o_1 & o_2 & a_1 & a_2 \\
        &v_1 & 2 & 3 & 10 & 18 \\
        &v_2 & 1 & 2 & 8 & 16
    \end{bmatrix}
\]

The assignment where agent $i_1$ receives apartment $a_2$ and agent $i_2$ receives apartment $a_1$ yields the same total utility as the assignment where agent $i_1$ receives apartment $a_2$ and agent $i_2$ receives apartment $a_1$:
\[
v_1(a_1) + v_2(a_2) = 10 + 16 = 26 = 18 + 8 = v_1(a_2) + v_2(a_1).
\]

Without loss of generality, we can assume that agent $i_1$ receives apartment $a_2$ and agent $i_2$ receives apartment $a_1$ in this optimal assignment.

The payment agent $i_1$ pays under this assignment is:
\[
p(v)_1 = \frac{10 - 3 - 18 + 2}{4} - \frac{16 - 1 - 8 + 2}{4} = \frac{-9}{4} - \frac{9}{4} = -\frac{9}{2},
\]
and the utility of agent 1 is:
\[
u(v)_1 = v_1(a_2) + p(v)_1 = 18 - \frac{9}{2} = 13 + \frac{1}{2}.
\]

Assume agent 1 reports $v'(t_1) = x$ as the valuation of characteristic $ t_1 $. Then the new values become:
\[
    \begin{bmatrix}
        & & o_1 & o_2 & a_1 & a_2 \\
        &v_1' &x & 3 & 5x & 9x \\
        &v_2 & 1 & 2 & 8 & 16
    \end{bmatrix}
\]
There are two cases to consider:
\begin{enumerate}
    \item \textbf{The new assignment remains the same.}
    In this case,
    \begin{align*}
        & v_1'(a_1) + v_2(a_2) \leq v_1'(a_2) + v_2(a_1) \Leftrightarrow \\
        &
        5x + 16 \leq 9x + 8 \Leftrightarrow 8 \leq 4x \Leftrightarrow 2 \leq x.
    \end{align*}
    We say that, in order to keep the assignment unchanged, we need $2 < x$; otherwise agent $i_1$ does not change the valuation at all.
    
The payment agent 1 receives is:
\[
\begin{aligned}
    & p(v')_1 = \\
    &\frac{v_1'(a_1) - v_1'(o_2) - v_1'(a_2) + v_1'(o_1)}{4} - \\
    &\frac{v_2(a_2) - v_2(o_1) - v_2(a_1) + v_2(o_1)}{4} = \\
    &\frac{5x - 3 - 9x + x}{4} - \frac{9}{4} = \\
    &\frac{-3x - 12}{4} < \\
    &\frac{-6 -12 }{4} = \\
    &-\frac{9}{2}.
\end{aligned}
\]
Thus, the utility of agent 1 is:
\[
u(v')_1 = v_1(a_2) + p(v')_1 < 18 - \frac{9}{2} = u(v)_1.
\]

\item \textbf{The new assignment changes.} In this case, $x < 2$.

The payment agent 1 receives is:
\[
\begin{aligned}
    &p(v')_1 = \\
    &\frac{v_1'(a_2) - v_1'(o_2) - v_1'(a_1) + v_1'(o_1)}{4} - \\
    &\frac{v_2(a_1) - v_2(o_1) - v_2(a_2) + v_2(o_1)}{4} = \\
    &\frac{9x - 3 - 5x + x}{4} - \frac{8 - 1 - 16 + 2}{4} = \\
    &\frac{5x - 3}{4} - \frac{-7}{4} = \frac{5x + 4}{4} <\\
    &\frac{10 + 4}{4} = \\
    &\frac{7}{2} .
\end{aligned}
\]
Thus, the utility of agent 1 is:
\[
u(v')_1 = v_1(a_1) + p(v')_1 < 10 + \frac{7}{2} < 13 + \frac{1}{2} = u(v)_1.
\]
\end{enumerate}

In both cases, agent 1 risks decreasing his utility. Thus, the mechanism is Risk-Averse truthful.

\end{toappendix}


\begin{proposition}
With 
either  multiplicative or
additive characteristics,
when there are $n\geq 2$ agents and the
characteristics of  all new apartments are unknown,
the Minimum Disproportionality mechanism is risk-avoiding truthful.
\end{proposition}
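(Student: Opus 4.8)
The plan is to collapse the problem into one inequality about the \emph{gain} of a misreport, then to show that whenever agent~$1$ misreports the adversary --- who by hypothesis fixes both the (unknown) characteristics of the new apartments and the reports $v_{-1}$ of the other agents --- can make that gain strictly negative; since a misreport that is sometimes strictly harmful is not safe, this already rules out any misreport that is simultaneously safe and profitable. First I would put the manipulated utility in normal form. Write $v=(v_1,v_{-1})$, $v'=(v_1',v_{-1})$, let $A,A'$ be the utilitarian-maximizing assignments under $v,v'$, and set $\delta(a):=v_1'(a)-v_1(a)$ for every apartment $a$ (so $\delta(a)=\sum_t(\alpha_{1,t}'-\alpha_{1,t})\beta_{t,a}$ in the additive model and $\delta(a)=\phi_a\big(\prod_t(\rho_{1,t}')^{\beta_{t,a}}-\prod_t\rho_{1,t}^{\beta_{t,a}}\big)$ in the multiplicative model). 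Expanding the payment $p_1$ of \eqref{eq:dp-payments} in terms of the reported valuations and plugging it into agent~$1$'s \emph{true} utility $v_1(A'_1)+p_1(v')$, I expect to obtain
\[
u_1(v') = -\,\delta(A'_1) + \frac{1}{n}W'(A') + \frac{n-1}{n^2}\big(v_1'(M)-v_1'(O)\big) + \frac{n-1}{n}v_1'(o_1) + K(v_{-1}),
\]
where $W'(B):=v_1'(B_1)+\sum_{j\neq1}v_j(B_j)$ is the reported utilitarian welfare of $B$, $v_1'(M)=\sum_k v_1'(a_k)$, $v_1'(O)=\sum_k v_1'(o_k)$, and $K(v_{-1})$ collects terms not involving agent~$1$'s report. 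Subtracting $u_1(v)$ yields the gain
\[
u_1(v')-u_1(v) = -\,\delta(A'_1) + \frac{W'(A')-W(A)}{n} + \frac{n-1}{n^2}\big(\Delta_M-\Delta_O\big) + \frac{n-1}{n}\delta(o_1),
\]
with $\Delta_M:=\sum_k\delta(a_k)$ and $\Delta_O:=\sum_k\delta(o_k)$; and since $A$ maximizes $W$ while $A'$ maximizes $W'$, one has the sandwich $\delta(A_1)\le W'(A')-W(A)\le\delta(A'_1)$, so that when the misreport leaves the assignment unchanged the gain collapses to $\frac{n-1}{n^2}\big[\Delta_M-\Delta_O-n(\delta(A_1)-\delta(o_1))\big]$.

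Next I would let the adversary drive this gain negative for an arbitrary misreport $v_1'\neq v_1$, distinguishing cases by where the changed characteristics sit. If some changed characteristic escapes $o_1$ or appears in another old apartment, the adversary gives the other agents huge generic values that make one assignment $A$ the unique welfare-maximizer under both $v$ and $v'$ (so $A'=A$), loads the changed characteristics onto $A_1$ and onto no other new apartment, and reads off the collapsed formula: $\Delta_M=\delta(A_1)$, while $\delta(o_1)$ and $\Delta_O$ are pinned by the \emph{known} characteristics of the old apartments, and a direct computation makes the bracket negative in exactly this case. The residual case is that every changed characteristic lies in $T(o_1)$ and in no other old apartment; a symmetric under-reporting argument also lets me assume agent~$1$ only \emph{inflates}, so $\Delta_M>0$. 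Here I would exploit that the new apartments' characteristics are unknown: let the adversary put all the changed characteristics (and nothing else) on a new apartment $a^*$, so that $v_1(a^*)$ equals agent~$1$'s true total value $V$ of those characteristics; take a second new apartment $b$ with $v_1(b)\in(V,\,V+\Delta_M)$ carrying none of the changed characteristics; and make every other agent value everything at~$0$. Then the truthful mechanism hands $b$ to agent~$1$ (because $v_1(b)>V$), whereas the reported mechanism, seeing $v_1'(a^*)=V+\Delta_M>v_1(b)$, hands $a^*$ to agent~$1$. In this scenario $\delta(A'_1)=\Delta_M=\delta(o_1)=\Delta_O$ and $W'(A')-W(A)=V+\Delta_M-v_1(b)$, so the gain formula evaluates to $\frac{V-v_1(b)}{n}<0$: agent~$1$ captured the inflated characteristic on paper but was pushed onto a strictly worse apartment and loses more in payment than it gained.

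It follows that every misreport is occasionally strictly harmful, so none is both safe and profitable, and the Minimum Disproportionality mechanism is risk-avoiding truthful. The argument is model-agnostic: the normal form and the sandwich use only that the mechanism maximizes reported welfare and that the old-apartment part of $p_1$ depends on reported valuations alone, while the adversary's constructions only need the freedom to choose which new apartments carry the changed characteristics and to choose $v_{-1}$, so everything transfers verbatim to both characteristic models with the appropriate $\delta(\cdot)$. I expect the main obstacle to be the residual case: one must verify that the welfare-reducing reassignment onto $a^*$ can actually be \emph{forced} --- not merely permitted --- by a legal profile $v_{-1}$ and a legal choice of new-apartment characteristics, coping with ties in the utilitarian-welfare maximization, with other new apartments accidentally becoming attractive, and, in coarse instances, with the realizability of a value $v_1(b)$ inside the interval $(V,V+\Delta_M)$; and then one must check that the payment terms isolated in the normal form never reimburse agent~$1$ for the degraded assignment. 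Most of the bookkeeping will live there.
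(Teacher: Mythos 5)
Your normal form for the gain,
\[
u_1(v')-u_1(v) = -\,\delta(A'_1) + \tfrac{1}{n}\bigl(W'(A')-W(A)\bigr) + \tfrac{n-1}{n^2}\bigl(\Delta_M-\Delta_O\bigr) + \tfrac{n-1}{n}\delta(o_1),
\]
and the sandwich $\delta(A_1)\le W'(A')-W(A)\le\delta(A'_1)$ are correct and are in fact a cleaner bookkeeping device than the paper's term-by-term expansion of $p^A$ and $p^O$. But the residual case --- which you correctly identify as the crux, and which is exactly the case the paper's proof is really about --- has a genuine gap, not just bookkeeping. Your adversary zeroes out all other agents and then needs a new apartment $b$, carrying none of the changed characteristics, with $v_1(b)\in(V,\,V+\Delta_M)$. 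The adversary only controls \emph{which characteristics from the fixed set $T$} each new apartment carries, so $v_1(b)$ ranges over the (finitely many) values of subsets of $T\setminus T_{\mathrm{changed}}$ under agent $1$'s \emph{true} valuation; nothing guarantees any of these lands in the open interval. In the extreme case $T=T_{\mathrm{changed}}$ (say a single inflated characteristic unique to $o_1$), the only available value is $v_1(b)=0$ and the construction collapses. The paper closes exactly this hole by a different device: it keeps the other agents' valuations nontrivial and chooses $v_2(t)$ strictly between $v_1(t)$ and $v_1'(t)$ on each changed characteristic ($\gamma_t$ with $0<\gamma_t<z_t$), so that $v_2(a_1)-v_2(a_2)$ automatically falls in the required interval $\bigl(v_1(a_1)-v_1(a_2),\,v_1'(a_1)-v_1'(a_2)\bigr)$ regardless of how coarse $T$ is. That substitution is the missing idea, not a routine verification.

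Two smaller but real problems. First, your claim that everything "transfers verbatim" to the multiplicative model is false for the residual case: there $\delta(a^*)=v_1(a^*)(z-1)$ while $\delta(o_1)=v_1(o_1)(z-1)$, so your identity $\delta(A'_1)=\delta(o_1)=\Delta_O=\Delta_M$ fails unless the base values happen to coincide; the paper has to impose and justify the extra conditions $v_1(a_1)\ge v_1(o_1)$ and $v_1(a_2)\ge v_1(o_*)$ (by loading the unchanged characteristics favorably) to get the sign of the $\Delta^O$-type term right. Second, in your Case 1 the prescription ``load the changed characteristics onto $A_1$ and onto no other new apartment'' has the wrong sign for \emph{decreased} characteristics: with $\delta_t<0$ placed only on $A_1$, the bracket contribution $-(n-1)\delta_t$ is positive; the adversary must instead place decreased characteristics on the apartments \emph{other} than $A_1$. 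This is fixable by a per-characteristic placement rule, but as written the ``direct computation'' does not make the bracket negative.
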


\begin{proofsketch}
Recall that the Minimum Disproportionality mechanism (\Cref{def:min-envy-sum}) computes an assignment of the new apartments that maximizes the utilitarian welfare, and a balanced payment vector $\price{}$ by \eqref{eq:dp-payments}:
$
    p_i = DP_i(A) - \frac{DP_N(A)}{n}. 
$

Consider $n$ agents with old apartments $o_1, \ldots, o_n$, and suppose the new apartments are $a_1, \ldots, a_n$. 

Suppose agent 1 increases the values of some characteristics (in set $T_+ \subseteq T$), decreases the values of some other characteristics (in set $T_- \subseteq T\setminus T_+$), and does not change the values of the remaining characteristics (in set $T_0 := T\setminus (T_+ \cup T_-)$).

The characteristics of the $o_i$ are known to the manipulator, and do not affect the assignment. Hence, it is sufficient to check the case that  all characteristics in $T_+$ are in $T(o_1)$ and not in any $T(o_i)$ for $i\geq 2$,
and that all characteristics in $T_-$ are in every $T(o_i)$ for $i\geq 2$ and not in $T(o_1)$.
 This is clearly the best case for the manipulator, as it increases $DP_1(A)$ by the largest amount. We show that, even in that case, the manipulation is not safe. It follows that the manipulation is not safe in the cases less favorable for the manipulator.

Since the characteristics of all new apartments are unknown,  
it is possible that 
\begin{itemize}
\item 
some new apartment (say $a_1$) 
has all the characteristics in $T_+$ and no characteristic from $T_-$;
\item 
all other apartments $a_i$ for $i\ge 2$ have the same set of characteristics, which contains all the characteristics in $T_-$ and no characteristic from $T_+$; 
\item all other agents $i$ for $i\ge 2$ have the same valuation function, $v_2$.
\end{itemize}

We show that, in this case, it is possible that the assignment changes. Particularly, without the  manipulation agent 1 gets some $a_i$ for $i\geq 2$ (w.l.o.g. $a_2$, as all these apartments are identical); and with the manipulation, agent 1 gets $a_1$.
Note that, by the maximum-value matching definition, it suffices to show that $v_1(a_1) + v_2(a_2) < v_2(a_1) + v_1(a_2)$ and $v_1'(a_1) + v_2(a_2) > v_2(a_1) + v_1'(a_2)$, as $v_2$ is the valuation function of \emph{all} agents $2,\ldots,n$. These two conditions are equivalent to:
\begin{align}
	\label{eq:sufficient-condition-for-matching}
v_1(a_1) - v_1(a_2) < v_2(a_1) -  v_2(a_2) < v_1'(a_1) - v_1'(a_2).
\end{align}
We show a specific $v_2$ that satisfies these conditions. We also compute the changes in prices due to the manipulation, and show that, overall, agent 1 loses utility, which means that the manipulation is not safe.
\end{proofsketch}

\begin{proof}

Without manipulation, the payment received by agent 1 is (by \Cref{eq:dp-payments-explicit}):
\begin{align*}
	p(v)_1 = \frac{1}{n^2} \Bigg[p^A - p^O\Bigg],
\end{align*}
where $p^A$ and $p^O$ represent the contributions of new and old apartments respectively to the payment of agent 1 (for brevity we omit the subscript $1$). In this case $a_1$ is allocated to some agent $j\geq 2$; w.l.o.g. we assume it is agent 2, as all these agents are identical.
\begin{align*}
	p^A =& (n-1)\left[v_1(a_1) + \sum_{j\geq 3}v_1(a_j)\right]
	 - (n-1)^2 v_1(a_2)
	\\
	&- \left[\sum_{k\neq 1} v_2(a_k) +  \sum_{j\geq 3}\sum_{k\neq  j} v_j(a_k) \right]
	+ (n-1)\left[v_2(a_1) + \sum_{j\geq 3}v_j(a_j)\right]
	\\
	p^O &= (n-1)\sum_{j\geq 2}v_1(o_j)
	- (n-1)^2 v_1(o_1)
	- \sum_{j\geq 2}\sum_{k\neq  j} v_j(o_k) + \sum_{j\neq i}(n-1)v_j(o_j)
\end{align*}
The manipulation affects both $p^A$ and $p^O$ by changing $v_1$ to $v_1'$. In addition, it affects $p^A$ by changing the assignment of agent 1 to $a_1$ and of agent 2 to $a_2$. All in all, the price components with manipulation are:
\begin{align*}
	{p'}^A &= (n-1)\left[v_1'(a_2) + \sum_{j\geq 3}v_1'(a_j)\right]
	- (n-1)^2 v_1'(a_1)
	\\
	&- \left[\sum_{k\neq 2} v_2(a_k) - \sum_{j\geq 3}\sum_{k\neq  j} v_j(a_k) \right]
	+ (n-1)\left[v_2(a_2) + \sum_{j\geq 3}v_j(a_j) 	\right]
	\\
	{p'}^O &= (n-1)\sum_{j\geq 2}v_1'(o_j)
	- (n-1)^2 v_1'(o_1)
	- \sum_{j\geq 2}\sum_{k\neq  j} v_j(o_k) + \sum_{j\geq 2}(n-1)v_j(o_j)
\end{align*}
The difference in prices is $\frac{1}{n^2} \Bigg[\Delta^A - \Delta^O\Bigg]$, where
\begin{align*}
	\Delta^O := {p'}^O - p^O; && \Delta^A := {p'}^A - p^A.
\end{align*}
In addition to the price-change, agent 1 is also affected by the value change, $\Delta^V := v_1(a_1) - v_1(a_2)$.
Overall, the utility gain of agent 1 due to the manipulation is
$\frac{1}{n^2} \Bigg[\Delta^A - \Delta^O\Bigg] + \Delta^V
=
\frac{1}{n^2} \Bigg[\Delta^A  + n^2 \Delta^V - \Delta^O\Bigg]
$.
We aim to prove that the gain of agent 1 might be negative, that is, we should prove that $\Delta^A  + n^2 \Delta^V - \Delta^O < 0$.

We now compute simplified expressions for each component of this expression. We first simplify $\Delta^O$:
\begin{align}
	\notag
	\Delta^O &= (n-1)\sum_{j\geq 2}[v_1'(o_j) - v_1(o_j)]
- (n-1)^2 [v_1'(o_1) - v_1(o_1)]
\notag
\end{align}
To simplify this expression, we define $v_1(o_*)$ as the average value (for agent 1) of the $n-1$ apartments of the other agents, and similarly for $v_1'(o_*)$:
\begin{align*}
	v_1(o_*) := \frac{1}{n-1}\sum_{j\geq 2}v_1(o_j);
	&&
	v_1'(o_*) := \frac{1}{n-1}\sum_{j\geq 2}v_1'(o_j)
\end{align*}
Hence:
\begin{align}
	\label{eq:DeltaO}
	\Delta^O &= (n-1)^2[(v_1'(o_*) - v_1(o_*)) - (v_1'(o_1) - v_1(o_1))].
\end{align}

Next, we consider $\Delta^A$:
\begin{align}
\notag
	\Delta^A &= 
	(n-1)\left[v_1'(a_2)-v_1(a_1) + \sum_{j\geq 3}(v_1'(a_j)-v_1(a_j))\right]
	\\
\notag
	&- (n-1)^2 [v_1'(a_1) - v_1(a_2)]
	\\
\notag
&- [v_2(a_1) - v_2(a_2)] 
\\
\notag
&+ (n-1)\left[v_2(a_2)  - v_2(a_1)\right].
\end{align}
Note the last two lines can be simplified to $n\cdot [v_2(a_2)-v_2(a_1)]$.
Hence,
\begin{align}
\notag
	\Delta^A + n^2\Delta^V &= 	
	(n-1)\left[v_1'(a_2)-v_1(a_2) + \sum_{j\geq 3}(v_1'(a_j)-v_1(a_j))\right]
\\
\notag
&- (n-1)^2 [v_1'(a_1) - v_1(a_1)]
+ n[v_2(a_2)  - v_2(a_1)]
	+ n\cdot \Delta^V.
\notag
\end{align}
This expression can be further simplified, as all apartments $a_2,\ldots,a_n$ are identical:
\begin{align}
	\notag
	\Delta^A + n^2\Delta^V &= 	
	(n-1)^2\left[(v_1'(a_2)-v_1(a_2)) - (v_1'(a_1) - v_1(a_1))\right]
	\\
	&+ n[(v_2(a_2)  - v_2(a_1)) + (v_1(a_1)-v_1(a_2))].
	\label{eq:DeltaAV}
\end{align}
Note that, by inequality \eqref{eq:sufficient-condition-for-matching} (the sufficient condition for matching), the last line of the above expression is negative. 
Hence, to prove that $\Delta^A  + n^2 \Delta^V - \Delta^O < 0$, it is sufficient to prove, in addition to \eqref{eq:sufficient-condition-for-matching}, that the following holds:
\begin{align}
	\notag
	&(n-1)^2\left[(v_1'(a_2)-v_1(a_2)) - (v_1'(a_1) - v_1(a_1))\right]
	\\
	\label{eq:sufficient-condition-for-loss}
	-&
	(n-1)^2[(v_1'(o_*) - v_1(o_*)) - (v_1'(o_1) - v_1(o_1))]
	\leq 0.
\end{align}

We now prove conditions \eqref{eq:sufficient-condition-for-matching} and \eqref{eq:sufficient-condition-for-matching} separately for additive and multiplicative characteristics.

\underline{\bf Additive Characteristics:} 
By assumption, for all $t\in T_+$ there is $z_t > 0$ such that $v'_1(t) = v_1(t) + z_t$ and 
for all $t\in T_-$ there is a 
$y_t < 0$ such that $v'_1(t) = v_1(t) + y_t$, and 
and $v'_1(t) = v_1(t)$ 
for all other $t$.
Denote 
\begin{align*}
z := \sum_{t\in T_+} z_t; && y := \sum_{t\in T_-} y_t.
\end{align*}

Since $T_+\subseteq T(o_1)\setminus T(o_j)$ and 
$T_-\subseteq T(o_j)\setminus T(o_1)$ for all $j\geq 2$,
\begin{align*}
v_1'(o_1) = v_1(o_1) + z > v_1(o_1); && 
v_1'(o_j) = v_1(o_j) + y < v_1(o_j).
\end{align*}

Since $T_+\subseteq T(a_1)\setminus T(a_2)$ and 
$T_-\subseteq T(a_2)\setminus T(a_1)$,
\begin{align*}
v_1'(a_2) = v_1(a_2) + y < v_1(a_2); && 
v_1'(a_1) = v_1(a_1) + z > v_1(a_1).
\end{align*}

We now construct the valuation function $v_2$ of agents $2,\ldots,n$  such that conditions \eqref{eq:sufficient-condition-for-matching} hold, that is,
\begin{align}
	\label{eq:sufficient-condition-additive}
	&
	v_1(a_1) - v_1(a_2) < v_2(a_1) -  v_2(a_2) <  v_1(a_1) - v_1(a_2) + z - y;
\end{align}
this should be possible as $z > 0 > y$.
Specifically, suppose that 
\begin{align*}
v_2(t)  = 
\begin{cases}
    v_1(t) + \gamma_t ~~~ (\text{ where } z_t>\gamma_t>0) & t\in T_+;
    \\
    v_1(t) + \delta_t ~~~ (\text{ where } y_t<\delta_t<0) & t\in T_-;
    \\
    v_1(t) & \text{otherwise}
\end{cases}
\end{align*}

Denote 
\begin{align*}
	\gamma := \sum_{t\in T_+} \gamma_t; && \delta := \sum_{t\in T_-} \delta_t.
\end{align*}

Since $a_1$ contains all characteristics from $T_+$ and none from 
$T_-$,
\begin{align*}
    & v_1(a_1) - \sum_{t\in T_+}v_1(t) = 
    v_2(a_1) - \sum_{t\in T_+}v_2(t) =\\
                      & v_2(a_1) - \sum_{t\in T_+}v_1(t) - \sum_{t\in T_+}\gamma_t  ;
                      \\
                     \Rightarrow &
v_1(a_1) = v_2(a_1) - \gamma.
\end{align*}
Similarly,  $a_2$ contains all characteristics from $T_-$ and none from $T_+$,
\begin{align*}
    & v_1(a_2) - \sum_{t\in T_-}v_1(t) = 
    v_2(a_2) - \sum_{t\in T_-}v_2(t) =\\
                      & v_2(a_2) - \sum_{t\in T_-}v_1(t) - \sum_{t\in T_-}\delta_t  ;
                      \\
                      \Rightarrow &
v_1(a_2)  = v_2(a_2)- \delta.
\end{align*}
Thus,
\begin{align*}
v_2(a_1) - v_2(a_2)  = v_1(a_1) - v_1(a_2) + \gamma-\delta.
\end{align*}
This satisfies \eqref{eq:sufficient-condition-additive} as $y<\delta<0<\gamma<z$, so $0 <\gamma-\delta < z-y$.
Thus, without manipulation agent 1 gets $a_2$ (or any equivalent apartment $a_j$ with $j\ge 2$), and with manipulation agent 1 gets $a_1$.

It remains to prove inequality \eqref{eq:sufficient-condition-for-loss}.
Indeed,
$
v_1'(a_2) = v_1(a_2)+y, 
v_1'(a_1) = v_1(a_1)+z, 
v_1'(o_*) = v_1(o_*)+y, 
v_1'(o_1) = v_1(o_1)+z$,
so the difference is $(n-1)^2[(y-z)  - (y-z)] = 0$.
Combining with \eqref{eq:sufficient-condition-for-matching} implies that the utility gain of agent 1 due to the manipulation is negative.

\underline{\bf Multiplicative Characteristics:} 
For all $t \in T$ there is a factor $z_t > 1$ such that $v'_1(t) = z_t \cdot v_1(t)$ for all $t\in T_+$,
and a factor $0 < y_t<1$ for 
such that $v'_1(t) = y_t \cdot v_1(t)$
all $t\in T_-$,
and $v'_1(t)=v_1(t)$ for all other $t \in T$.
~
Denote 
\begin{align*}
z := \prod_{t\in T_+} z_t; && y := \prod_{t\in T_-} y_t.
\end{align*}

Since $T_+\subseteq T(o_1)\setminus T(o_j)$ and 
$T_-\subseteq T(o_j)\setminus T(o_1)$ for all $j\geq 2$,
\begin{align*}
v_1'(o_1) = z\cdot  v_1(o_1) > v_1(o_1); && 
v_1'(o_j) = y\cdot v_1(o_j) < v_1(o_j).
\end{align*}

Since $T_+\subseteq T(a_1)\setminus T(a_2)$ and 
$T_-\subseteq T(a_2)\setminus T(a_1)$,
\begin{align*}
v_1'(a_1) = z \cdot v_1(a_1) > v_1(a_1); && 
v_1'(a_2) = y \cdot v_1(a_2) < v_1(a_2).
\end{align*}

We construct $v_2$ such that conditions \eqref{eq:sufficient-condition-for-matching} hold, that is,
\begin{align}
\label{eq:sufficient-condition-multiplicative}
	v_1(a_1) - v_1(a_2) < v_2(a_1) -  v_2(a_2) < z v_1(a_1) - y v_1(a_2).
\end{align}
Similarly to the additive case, this can be ensured by defining $v_2$ as follows:
\begin{align*}
	v_2(t)  = 
	\begin{cases}
		v_1(t) \cdot \gamma_t ~~~ (\text{ where } z_t>\gamma_t>1) & t\in T_+;
		\\
		v_1(t)\cdot  \delta_t ~~~ (\text{ where } y_t<\delta_t<1) & t\in T_-;
		\\
		v_1(t) & \text{otherwise}
	\end{cases}
\end{align*}

It remains to prove inequality \eqref{eq:sufficient-condition-for-loss}.
Here,
$
v_1'(a_2) = y\cdot v_1(a_2), 
v_1'(a_1) = z\cdot v_1(a_1), 
v_1'(o_*) = y\cdot v_1(o_*), 
v_1'(o_1) = z\cdot v_1(o_1)$,
so the difference is 
\begin{align*}
	(n-1)^2 [ (y-1)[v_1(a_2) - v_1(o_*)] - (z-1)[v_1(a_1)-v_1(o_1)]].
\end{align*}
This expression is indeed at most $0$ if the following two additional conditions hold for the new apartments:
\begin{enumerate}
	\item $v_1(a_1) \geq v_1(o_1)$;
	\item $v_1(a_2) \geq v_1(o_*)$;
\end{enumerate}
For the first condition, note that both $a_1$ and $o_1$ contain all characteristics from $T_+$ and no characteristic from $T_-$. It is also possible that $a_1$ contains all characteristics in $T_0$ with value larger than 1, and no characteristic in $T_0$ with value smaller than 1. Hence, $v_1(a_1) \geq v_1(o_1)$.

Similarly, for the second condition, note that both $a_2$ and $o_j$ for $j\geq 2$ contain all characteristics from $T_-$ and no characteristic from $T_+$. It is also possible that $a_2$ contains all characteristics in $T_0$ with value larger than 1, and no characteristic in $T_0$ with value smaller than 1. Hence, $v_1(a_2) \geq v_1(o_*)$, so the difference is at least $0$.

Combining with \eqref{eq:sufficient-condition-for-matching} implies that the utility gain of agent 1 is negative.
\end{proof}

\section{Experiments} \label{sec: exp}
The possible non-existence of envy-free and proportional allocations has motivated us to check what envy can be attained in realistic Reconstruct and Divide projects.

\subsection{Apartment characteristics}
Initially, we constructed a list $T$ of 18 apartment characteristics, based on 
characteristics used by appraisers in Reconstruct and Divide projects, as well as characteristics used in Yad2 --- an Israeli website for selling used apartments (similar to Zillow).

These characteristics capture both structural and qualitative aspects of housing value, such as renovation, balcony, garden apartment, registered parking, storage, directions of exposure, floor level, elevator, accessibility, room count, spaciousness, building rights, orientation, road exposure, air directions, and natural light \footnote{
The final set of characteristics includes: renovated apartment, balcony, garden apartment, registered parking, storage, safe room, directions of exposure, high floor, low floor, elevator, accessibility, more than four rooms, fewer than four rooms, spacious living area or kitchen, building rights, view-facing apartment, road-facing apartment, three or more air directions, and natural light.}.

In addition, we collected empirical data from 28 apartments involved in real Reconstruct and Divide projects in Jerusalem and Haifa, Israel. For each apartment (both pre-renewal and newly constructed units), we extracted its size (in square meters), its price per square meter, and the presence or absence of each characteristic in $T$, based on transaction databases and project documentation.

If information about a characteristic was missing for a given apartment, we completed it by assigning a value of 0 or 1 independently with probability $1/2$. The base price of each apartment is computed as its size multiplied by its price per square meter.

\subsection{Valuation elicitation}
We assumed that agents' utilities are generated by multiplicative characteristics (see \Cref{sec: manipulations}), which is more similar to the method used in practice by real-estate appraisers.

We conducted a survey among $45$ apartment owners (provided to us by a commercial Internet-panel company). Each participant was asked to specify the percentage by which they believe each characteristic (positively or negatively) affects an apartment’s base price. If respondent $i$ assigns a positive influence of $X$ percent to a characteristic $t \in T$, we define $v_i(t) = 1+\frac{X}{100}$. If the influence is negative, we define $v_i(t) = \max\{1-\frac{X}{100}, 10^{-5}\}$.

Designing this elicitation method was nontrivial and required several iterations and pilot studies. In early versions, participants were asked to evaluate characteristics in isolation (e.g., “How much does a balcony add to the value of an apartment?”). This led to unrealistically large reported effects—often exceeding $90\%$, corresponding to multiplicative factors above $1.9$—which, when combined across characteristics, produced implausibly high valuations.

Based on these pilot results, we redesigned the survey to elicit valuations in context. Instead of evaluating characteristics in isolation, participants were asked to assess the marginal impact of adding a characteristic to a concrete apartment description (e.g., “If an apartment without a balcony is worth 1 million, how much would you pay for the same apartment with an added balcony?”). This contextualized elicitation approach yielded substantially more realistic and stable valuations. All data used in our experiments were collected using this final survey design.

In our initial experiment, we observed that some respondents assigned near-100\% positive influence to more than ten characteristics, leading to large multiplicative effects on the base apartment value and, in some cases, to apartment valuations that exceeded 10 million. Since such values lie well beyond a realistic market scale, we introduced a normalization procedure that is applied only when extreme values occur.

Rather than excluding these responses, we retained them and applied a structure-preserving normalization triggered by a threshold on the cumulative impact of characteristics. 

Formally, for each agent $i$ and apartment $a$ in the dataset, we compute the product of the agent's values for the characteristics of that apartment, $w_i^a := \prod_{t\in T}\rho_{i,t}^{\beta_{t,a}}$. We then compute for each agent $i$ the maximum of that product over all apartments, $w_i^{\max} := \max_{a}w_i^a$.
If $w_i^{\max}$ is larger than a threshold $W$, we normalize all characteristic values for $i$ logarithmically as follows, so that the maximum product for $i$ is exactly $W$:
\begin{align*}
    \rho_{i,t}^* := \exp\left(
        \log(\rho_{i,t})\cdot \frac{\log(W)}{\log(w_i^{\max})}
    \right)
\end{align*}
If $w_i^{\min} := \min_{a}w_i^a$ is smaller than $1/W$, we normalize in a similar way.
In the experiments we used $W=2$ for the normalization.

\subsection{Experiment}
We generate an instance of Reconstruct and Divide as follows.

We generate $45$ agents, with valuations taken from the $45$ responses to our survey.
We assigned each agent $i$ a random old apartment $o_i$ and computed their valuations for all new apartments using a multiplicative utility function (as commonly employed by real-estate appraisers).
For example, if one subject gave valuations of $1.1$, $1.2$ and $0.9$ to the three characteristics present in the apartment, then we computed this subject's value for the entire apartment as $\text{base-value} \cdot 1.1\cdot 1.2 \cdot 0.9$.


To capture behavioral biases observed in real housing markets, we examined the model’s sensitivity to the \emph{endowment effect} --- the tendency of individuals to assign a different value to goods they already own than to identical goods they do not own \cite{wikipedia-Endowment-effect}. 

For each characteristic $t\in T$, participants indicated whether their current apartment possessed that characteristic, and we then estimated, for each $t$, a separate linear regression in which the coefficient represents the ownership premium --- the difference between the value assigned to a characteristic when it is part of the respondent’s own apartment versus when it appears in alternative apartments. 
\Cref{tab:endowment-effects} reports the estimated coefficients and their corresponding $p$-values ($Prob>|t|$).

\begin{table}[H]
\caption{Characteristics, endowment-effect coefficients, and $p$-values.}
\label{tab:endowment-effects}
\centering
\setlength{\tabcolsep}{6pt}
\begin{tabularx}{\textwidth}{|X|X|X|}
\hline
\textbf{Characteristic} & \textbf{Endowment effect} & \textbf{$p$-value} \\
\hline
Renovated apartment        & -15.57018  & 0.0840  \\ \hline
Balcony                    & -7.406832  & 0.1281  \\ \hline
Garden apartment           & 10.00000   & 0.1639  \\ \hline
Registered parking         & -6.18254   & 0.1995  \\ \hline
Storage                    & 1.92547    & 0.6444  \\ \hline
Safe Room                  & -7.27941   & 0.2720  \\ \hline
Air conditioning           & -9.92424   & 0.3789  \\ \hline
High floor                 & -10.93750  & 0.0440  \\ \hline
Low floor                  & -6.53509   & 0.1359  \\ \hline
Elevator                   & 0.17081    & 0.9742  \\ \hline
Accessible                 & 0.27941    & 0.9579  \\ \hline
More than 4 rooms          & -13.36765  & 0.1298  \\ \hline
Less than 4 rooms          & -4.30000   & 0.4628  \\ \hline
Spacious living/kitchen    & -10.02381  & 0.2170  \\ \hline
Building rights            & -11.25926  & 0.2557  \\ \hline
View-facing apartment      & 12.65839   & 0.0266  \\ \hline
Road-facing apartment      & -6.11888   & 0.5410  \\ \hline
3+ air directions          & 1.63235    & 0.7796  \\ \hline
Natural light              & 14.36275   & 0.1536  \\ \hline
\end{tabularx}
\end{table}

Let $Val_{i,t}$ denote the value reported by agent $i$ for characteristic $t$.  
The adjusted valuation used in the experiments was defined as
\[
\tilde v_{i,t} =
\begin{cases}
Val_{i,t}, & \text{if the ownership status of } t \text{ is unchanged},\\[1mm]
Val_{i,t}+\beta_t, & \text{if } t \text{ $i$ owns $t$ in experiment but not in survey},\\[1mm]
Val_{i,t}-\beta_t, & \text{if } t \text{ $i$ owns $t$ in survey but not in experiment}.
\end{cases}
\]

For example, for \emph{renovated apartment} characteristic, with an endowment effect of $-15.57\%$, gaining the characteristic decreases the valuation by $15.57\%$, whereas losing it increases the valuation by the same amount.

Only a small number of characteristics exhibited statistically significant ownership effects ($p<0.05$). Hence, our current results do not provide evidence for an endowment effect. On the other hand, the inconclusive results might be due to the small sample size.
Hence, we decided to conduct two separate experiments: one using the raw values and ignoring the endowment effect, and one using the adjusted values.
We hope to further investigate the strength and consistency of the endowment effect in future work.

After computing the valuations, we compute an assignment $A$ maximizing utilitarian welfare, by computing a maximum-weight bipartite matching between the agents and their valuations for the new apartments (via the Hungarian algorithm).


To find a payment vector that minimizes the largest envy, we solve the following linear program:
\begin{align*}  
    \displaystyle\min_{\price{},z}& z \\
    \text{s.t. } \forall i\neq j: (v_i(a_j) - v_i(o_j) + p_j) &- (v_i(a_i) - v_i(o_i) + p_i) \leq z \\
    \sum_{i} p_i &= 0
\end{align*}

To find the payment vector that minimizes the largest disproportionality, we apply the payment rule defined by the Minimum Disproportionality mechanism (see \Cref{def:min-envy-sum}).

Finally, for both payment vectors, we compute the resulting disproportionality and envy experienced by any agent.
Moreover, we calculate the average valuation across all new apartments and agents, denoted by $V$. To obtain a more meaningful metric, we normalize the disproportionality and envy values with respect to this average. 
The results for the experiments with raw values and with the endowment effect are presented in \Cref{tab:experiment_without_endowment} and \Cref{tab:experiment_with_endowment} respectively.
All reported results are averaged over 10 independent runs.


\begin{table}[H]
\centering
    \caption{Comparison of envy and disproportionality, evaluated under identical allocations of old and new apartments without endowment-effect correction. Each experiment is repeated 10 independent runs, and the values reported in the table are averages across these runs. The normalization threshold used is $W=2$ and the average valuation of the new apartments is $V\approx 6,194,371$ ILS.\label{tab:experiment_without_endowment}}
\renewcommand{\arraystretch}{1.05}
\setlength{\tabcolsep}{6pt}

\begin{tabularx}{\textwidth}{|X|X|X|X|X|}
\hline
& \makecell{\textbf{Largest} \\ \textbf{dispropor-}\\\textbf{tionality}} & $\frac{\makecell{\textbf{Largest} \\ \textbf{dispropor-}\\\textbf{tionality}}}{V}$ & \textbf{Largest envy} &$\frac{\textbf{Largest envy}}{V}$\\
         \midrule
         \makecell{\textbf{Minimizing} \\\textbf{Envy} \\ \textbf{Payments}} & 630051.71 & 0.102 & \textbf{1256569.381} & \textbf{0.203}\\
         \midrule
         \makecell{\textbf{Minimizing} \\ \textbf{Dispropor-}\\\textbf{tionality} \\ \textbf{Payments}} &  \textbf{-753708.779} & \textbf{-0.121} & 12317415.16 & 1.986\\
         \bottomrule
\end{tabularx}
\end{table}

\begin{table}[H]
\centering
    \caption{Comparison of envy and disproportionality, evaluated under identical allocations of old and new apartments with endowment-effect correction. Each experiment is repeated 10 independent runs, and the values reported in the table are averages across these runs. The normalization threshold used is $W=2$ and the average valuation of the new apartments is $V\approx 6,178,831$ ILS.\\ \label{tab:experiment_with_endowment}}
\renewcommand{\arraystretch}{1.05}
\setlength{\tabcolsep}{6pt}

\begin{tabularx}{\textwidth}{|X|X|X|X|X|}
\hline
& \makecell{\textbf{Largest} \\ \textbf{dispropor-}\\\textbf{tionality}} & $\frac{\makecell{\textbf{Largest} \\ \textbf{dispropor-}\\\textbf{tionality}}}{V}$ & \textbf{Largest envy} &$\frac{\textbf{Largest envy}}{V}$\\
         \midrule
         \makecell{\textbf{Minimizing-}\\\textbf{Envy} \\ \textbf{Payments}} & 470127.155 & 0.076 & \textbf{1309549.262} & \textbf{0.212}\\
         \midrule
         \makecell{\textbf{Minimizing} \\ \textbf{Dispropor-}\\\textbf{tionality} \\ \textbf{Payments}} &  \textbf{-1196156.687} & \textbf{-0.193} & 20023748.698 & 3.235\\
         \bottomrule
\end{tabularx}
\end{table}

\ref{tab:experiment_without_endowment},\ref{tab:experiment_with_endowment}, we observe that, although we could not find an EF-able allocation---i.e., the maximum envy remains positive---it is still possible to achieve a PROP-able allocation, where the disproportionality with payments is negative.
Moreover, as expected, a negative disproportionality (corresponding to a proportional allocation) does not necessarily imply that the maximum envy is negative too. 
At the same time, these results highlight that a simple utilitarian-maximization algorithm—originally developed for computing envy-freeable allocations—can be effectively repurposed to minimize disproportionality in the more complex setting with endowments.

Moreover, after normalization, we find that the maximum envy reaches approximately 50\% of an apartment's value--—indicating a high level of envy. This strengthens our motivation to address \Cref{open question: macc}, whether a better assignment exists that can achieve a lower maximum envy.

\section{Conclusions and Future Work}
We studied fairness in Reconstruct and Divide projects by considering two common fairness notions: envy-freeness and proportionality. We characterized when envy-free and proportional assignments exist. Since such assignments are not always guaranteed, we focused on minimizing envy and disproportionality through assignment and payment vectors.

We showed that the maximum attainable envy equals the maximum average mean cycle cost in the difference envy-graph. However, 
we could not find a polynomial-time algorithm that guarantees an optimal solution. Whether such an algorithm exists remains an open question.

On the positive side, we proposed a mechanism that minimizes maximum disproportionality. We also introduced a characteristic-based elicitation method to reduce manipulation. We also identified conditions under which the mechanism is resistant to safe manipulations. Extending these results to broader settings and identifying additional manipulation-resistant conditions remains an open direction.

We assumed that improvements are measured by differences; the model where improvements are measured by \textbf{ratios} is also conceptually appealing, but our current results for it are limited. Developing deeper theoretical insights in this model is a key area for future research.

In future work, we plan to expand our survey to include a larger and more diverse group of participants 
Our findings also highlight the importance of developing improved allocation methods that further reduce envy. In addition, we aim to gain a deeper understanding of the behavioral endowment effect through follow-up experiments designed to more accurately quantify its influence on perceived fairness and allocation outcomes.

\begin{toappendix}
    \section{Valuations as a Set of Characteristics in the maximum envy model} \label{app: Valuations as a Set of Characteristics}
\begin{proposition} \label{collection of characteristics no EF-able}
    There is a collection of characteristics, original apartments $O$, and a set of new apartments $M$ such that, for every assignments of one apartment per agent, the resulting assignment is not EF-able.
\end{proposition}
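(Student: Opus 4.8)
The plan is to instantiate the obstruction from \Cref{example:no-efable} using only characteristic-based valuations (either the additive or the multiplicative model of \Cref{sec: manipulations}). Recall from \Cref{theorem:mefable-iff-cost-cycles} that an assignment $A$ is EF-able if and only if the difference graph $G_{A,O}$ has no positive-cost cycle, and that the cost of a cycle $C$ in $G_{A,O}$ equals $cost_A(C) - cost_O(C)$. So it suffices to construct an instance in which (i) every new apartment is valued identically by every agent --- forcing $cost_A(C)=0$ on every directed cycle $C$, simultaneously for \emph{every} assignment $A$ --- while (ii) the original-apartments graph $G_O$ has a directed cycle of strictly negative cost, so that $cost_{A,O}(C) = -cost_O(C) > 0$ no matter which assignment is chosen.

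First I would take $n=2$ and a characteristic set $T \supseteq \{t_1,t_2\}$. I would give characteristic $t_1$ to $o_1$ only and $t_2$ to $o_2$ only, and give both new apartments the empty characteristic set, i.e.\ $\beta_{t,a_1}=\beta_{t,a_2}=0$ for all $t\in T$ (in the multiplicative model I would additionally set $\phi_{a_1}=\phi_{a_2}$). Then for each agent $i$ we have $v_i(a_1)=v_i(a_2)$, so in every assignment $A$ the envy-graph $G_A$ has all arc costs equal to $0$; consequently $G_{A,O} = -G_O$ regardless of $A$.

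Next I would pick the agents' characteristic values so that each agent strictly prefers their own old apartment: in the additive model, $\alpha_{1,t_1} > \alpha_{1,t_2}$ and $\alpha_{2,t_2} > \alpha_{2,t_1}$ (for instance $\alpha_{1,t_1}=\alpha_{2,t_2}=1$ and $\alpha_{1,t_2}=\alpha_{2,t_1}=0$), and in the multiplicative model the analogous inequalities on the $\rho_{i,t}$. Then $v_1(o_1) - v_1(o_2) > 0$ and $v_2(o_2) - v_2(o_1) > 0$, so the directed cycle $C = (i_1 \to i_2 \to i_1)$ satisfies $cost_O(C) = \bigl(v_1(o_2) - v_1(o_1)\bigr) + \bigl(v_2(o_1) - v_2(o_2)\bigr) < 0$, hence $cost_{A,O}(C) = -cost_O(C) > 0$ for every assignment $A$. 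By \Cref{theorem:mefable-iff-cost-cycles}, no assignment of the new apartments is EF-able, which proves the proposition; the very same construction works verbatim in both the additive and the multiplicative characteristic models.

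There is no genuine obstacle here: the only point requiring care is to verify that ``all new apartments look identical'' is actually realizable with characteristics --- it is, simply by giving all new apartments the same characteristic set and (in the multiplicative case) the same base price. If one prefers the characteristic data to distinguish the new apartments nontrivially, one can instead give $a_1$ and $a_2$ a common \emph{nonempty} characteristic set, and the argument is unchanged.
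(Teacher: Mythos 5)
Your proposal is correct and follows essentially the same route as the paper: both instantiate the obstruction of \Cref{example:no-efable} with characteristic-based valuations by making all new apartments identical (so every cycle has zero cost in $G_A$ for every assignment) while each agent's unique old-apartment characteristic forces a negative-cost cycle in $G_O$, and then invoke \Cref{theorem:mefable-iff-cost-cycles}. The paper happens to use three agents and gives the new apartments \emph{all} characteristics rather than none, but that difference is cosmetic.
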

\begin{proof}
    We consider 4 characteristics $t_1, t_2, t_3, t_4$.

    The values assigned by the agents to these characteristics are as follows:
    \[
    \begin{bmatrix}
          & t_1 & t_2      & t_3        & t_4       \\
      i_1 & z   & \epsilon  & 0         &  0        \\
      i_2 & y   & 0         & \epsilon  &  0        \\
      i_3 & x   & 0         & 0         & \epsilon
    \end{bmatrix}.
    \]
    where $x,y,z,\epsilon \in \mathbb{N}$.
    The indicator variables for the original apartments are defined as follows:
    \[
        \begin{bmatrix}
              & t_1 & t_2 & t_3 & t_4  \\
          o_1 & 1   & 1   & 0   & 0 \\
          o_2 & 1   & 0   & 1   & 0 \\
          o_3 & 1   & 0   & 0   & 1 \\
        \end{bmatrix}.
    \]

    The indicator variables for the new apartments are defined as follows:
    \[
        \begin{bmatrix}
              & t_1 & t_2 & t_3 & t_4  \\
          a_1 & 1   & 1   & 1   & 1 \\
          a_2 & 1   & 1   & 1   & 1 \\
          a_3 & 1   & 1   & 1   & 1 \\
        \end{bmatrix}.
    \]

    Thus, the total values of the old apartments are:
    \[
    \begin{bmatrix}
            & o_1           & o_2           & o_3       \\
        i_1 & z+\epsilon    & z             & z         \\
        i_2 & y             & y+ \epsilon   & y         \\
        i_3 & x             & x             & x + \epsilon
    \end{bmatrix}
    \]
    and the total values of the new apartments are:
    \[
    \begin{bmatrix}
            & a_1           & a_2           & a_3       \\
        i_1 & z+\epsilon    & z+\epsilon    & z+\epsilon\\
        i_2 & y+\epsilon    & y+\epsilon    & y+\epsilon\\
        i_3 & x+\epsilon    & x+\epsilon    & x+\epsilon
    \end{bmatrix}
    \]

Let $C$ be the cycle $i_1 \to  i_2 \to i_1$.
Then $cost_O(C) = -3\epsilon$.
In contrast, for any assignment $A$ of the new apartments, $cost_A(C) = 0$.
Hence, $cost_{A,O}(C) = cost_A(C) - cost_O(C) > 0$. By \Cref{theorem:mefable-iff-cost-cycles}, $A$ is not EF-able.
\end{proof}

\section{Envy-Freeable: Ratio Model} \label{section: Ratio Model}
\subsection{The Model}
%
In this model, each agent evaluates what is the fixed entitlement of every other agent. That is, the entitlement of agent $j\in N$, in agent $i\in N$ point of view is $w_{i,j} = \frac{v_i(o_j)}{\sum_{k\in N} v_i(o_k)} \geq 0$. 
Note that for each $i \in N$, it holds that $\sum_{j\in N} w_{i,j} = 1$.

\begin{Remark}
    This model differs from the one presented in \cite{elmalem2025whoever}. Here, the values of the original apartments are subjective and may vary between agents. In contrast, if the values were objective—such as when determined by an appraiser—Elmalem et al.'s \cite{elmalem2025whoever} model would be applicable.
\end{Remark}

\begin{Definition}
The \textit{ratio-envy} of agent $i$ towards agent $j$ under an assignment $A$
and payments $\price{}$ is defined as 
\[\frac{u_i(A_j)}{w_{i,j}} - \frac{u_i(A_i)}{w_{i,i}}
= \frac{v_i(A_j)+p_j}{w_{i,j}} - \frac{v_i(A_i)+p_i}{w_{i,i}}.\]
assignment $A$ and payment vector $\price{}$ are \emph{Ratio-envy-free (REF)} if 
for each $i,j \in N$:
\[
\frac{v_i(A_j)+p_j}{w_{i,j}} - \frac{v_i(A_i)+p_i}{w_{i,i}} \leq 0
\]
\end{Definition}
\begin{Definition}
An assignment $A$ is \emph{ratio envy-freeable (REF-able)} if there exists a vector of payments $\price{}$ such that $(A,\price{})$ are REF.
\end{Definition}

\begin{Definition}
    The \emph{ratio envy graph} of assignment $A$, denoted $G_{A,w}$, is a complete directed graph consisting of a set of vertices representing agents $N$.

For any pair of agents $i$ and $j$ in $N$, the cost assigned to the arc $(i,j)$ in $G_{A,w}$ is defined as the ratio envy that agent $i$ holds toward agent $j$ under assignment $A$: 
$cost_A(i,j) := \frac{v_i(A_j)}{w_{i,j}} - \frac{v_i(A_i)}{w_{i,i}}$.
We denote the cost of a path $(i_1,...,i_k)$ as $cost_A(i_1,...,i_k) = \sum_{j = 1}^{k-1} cost_A(i_j, i_{j+1})$.
\end{Definition}

\subsection{REF assignment}

The following proposition demonstrates that there exist scenarios in which no assignment and payment vector satisfy the REF criteria:

\begin{proposition} \label{prop: not REF-able}
    There are original apartments $O$ and a set of new apartments $M$ such that, for every assignment of one apartment per agent, the resulting assignment is not REF-able.
\end{proposition}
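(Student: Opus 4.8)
The plan is to exhibit a small concrete instance — the simplest candidate being $n=2$ agents with one characteristic or just two raw valuations — and then show directly that no payment vector can make it ratio-envy-free. The natural construction mirrors \Cref{example:no-efable}: let the two old apartments be worth almost the same to each agent (so the weights $w_{i,1},w_{i,2}$ are close to $1/2$ for both agents, but with agent $1$ slightly favoring $o_1$ and agent $2$ slightly favoring $o_2$), while the two new apartments are identical in each agent's eyes, say $v_i(a_1)=v_i(a_2)=V_i$. I would then analyze the ratio-envy graph $G_{A,w}$ for the (essentially unique, up to relabeling) assignment $A$.

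The key steps, in order: (1) Write down the instance explicitly, with parameters $w_{1,1}=1/2+\epsilon_1$, $w_{1,2}=1/2-\epsilon_1$, $w_{2,1}=1/2-\epsilon_2$, $w_{2,2}=1/2+\epsilon_2$, and new-apartment values $V_1,V_2>0$, checking that these weights arise from legitimate subjective valuations of the old apartments. (2) Fix an assignment, w.l.o.g. $A_1=a_1$, $A_2=a_2$ (the other assignment is symmetric). (3) Compute the cost of the $2$-cycle $C=(1\to 2\to 1)$ in $G_{A,w}$ \emph{with} a general payment vector $\price=(p,-p)$ (balanced), i.e.
\begin{align*}
cost_A(1,2)+cost_A(2,1)
&= \left(\frac{V_1-p}{w_{1,2}}-\frac{V_1+p}{w_{1,1}}\right)
+\left(\frac{V_2+p}{w_{2,1}}-\frac{V_2-p}{w_{2,2}}\right).
\end{align*}
(4) Show this quantity is strictly positive for a suitable choice of the parameters, no matter how $p$ is chosen. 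Unlike the difference model, here payments do \emph{not} simply cancel around the cycle (that was \Cref{lemm: cycle cost unchanged}, which relied on additive structure); the coefficients of $p$ are $-1/w_{1,2}-1/w_{1,1}+1/w_{2,1}+1/w_{2,2}$, which by the symmetry $w_{1,1}=w_{2,2}$, $w_{1,2}=w_{2,1}$ is exactly $0$ — so in fact the cycle cost \emph{is} payment-independent for this symmetric choice, and equals $V_1(1/w_{1,2}-1/w_{1,1})+V_2(1/w_{2,1}-1/w_{2,2}) = (V_1+V_2)\bigl(\tfrac{1}{1/2-\epsilon}-\tfrac{1}{1/2+\epsilon}\bigr)>0$. (5) Conclude via the cycle characterization: a positive-cost cycle in $G_{A,w}$ under every payment vector means $A$ is not REF-able; since both assignments are symmetric, no assignment is REF-able. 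I would need to either prove the analogue of \Cref{theorem:mefable-iff-cost-cycles} for the ratio model (positive-cost cycle $\Rightarrow$ not REF-able — this direction is the easy ``summing the envy inequalities around the cycle'' argument and does not need the full equivalence), or argue the contradiction directly: summing the two REF inequalities $\frac{v_i(A_j)+p_j}{w_{i,j}}\le \frac{v_i(A_i)+p_i}{w_{i,i}}$ around the cycle and deriving $0 < 0$.

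The main obstacle I anticipate is \emph{not} the computation but making sure the weights are realizable and that I have genuinely covered all payment vectors and both assignments simultaneously — in particular, verifying that the coefficient of $p$ in the cycle cost vanishes for the symmetric instance (so the \Cref{lemm: cycle cost unchanged}-style invariance survives in the ratio model \emph{for this special structure}), rather than hoping it vanishes in general (it does not). If that coefficient did not vanish one would have to optimize over $p$; choosing the fully symmetric instance $\epsilon_1=\epsilon_2=\epsilon$, $w_{1,1}=w_{2,2}$, $w_{1,2}=w_{2,1}$ is what kills that term and reduces the argument to a one-line inequality $(V_1+V_2)(\tfrac1{1/2-\epsilon}-\tfrac1{1/2+\epsilon})>0$. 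A secondary check is that the ``total'' normalization $\sum_j w_{i,j}=1$ is respected, which it is by construction. So the real work is just picking the instance cleverly; everything after that is routine.
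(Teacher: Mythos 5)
Your proposal is correct and takes essentially the same route as the paper's own proof: the same two-agent instance (each agent slightly prefers their own old apartment, identical new apartments), with the contradiction obtained from the incompatibility of the two REF inequalities under a balanced payment vector $p_1=-p_2$ --- the paper derives two inconsistent bounds on $p_2$ where you sum the inequalities around the $2$-cycle, but for $n=2$ these are the same computation, and your symmetric choice of weights plays the role of the paper's parameters $z,y$. One shared caveat worth noting: both you and the paper implicitly restrict to \emph{balanced} payments (in the ratio model, unlike the difference model, an unbalanced vector such as $p_1=p_2=-V$ drives every ratio-envy to zero in this instance), so the balancedness assumption is doing real work here and should be stated explicitly in the proposition or the definition of REF-able.
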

\begin{proof}
Consider the following example with 2 agents.
The values of the original apartments are
\[
\begin{bmatrix}
   & o_1 & o_2 \\
  i_1 & z + \epsilon & z \\
  i_2 & y & y + \epsilon\\
\end{bmatrix}
\]
The values for the new apartments are
\[
\begin{bmatrix}
   & a_1 & a_2 \\
  i_1 & V & V\\
  i_2 & V & V \\
\end{bmatrix}
\]
when $x,y,\epsilon,V \in \mathbb{N}$.

Notice that $w_{1,1} = \frac{z+\epsilon}{2z+\epsilon}, w_{1,2}=\frac{z}{2z+\epsilon}$ and $w_{2,2} = \frac{y+\epsilon}{2y+\epsilon}, w_{2,1}=\frac{y}{2y+\epsilon}$.
    
For any assignment $A$ such that each agent receives exactly one item, assuming there is a balanced payment vector $\price{}$, we have:
\begin{align*}
& \begin{cases}
    \frac{v_1(A_2) + p_2}{w_{1,2}} \leq \frac{v_1(A_1) + p_1}{w_{1,1}} \\
    \frac{v_2(A_1) + p_1}{w_{2,1}} \leq \frac{v_2(A_2) + p_2}{w_{2,2}}
\end{cases}
\Leftrightarrow \\
&
 \begin{cases}
    (V + p_2)\frac{2z+\epsilon}{z} \leq (V + p_1)\frac{2z+\epsilon}{z+\epsilon} \\
    (V+ p_1)\frac{2y+\epsilon}{y} \leq (V + p_2)\frac{2y+\epsilon}{y+\epsilon}
\end{cases}
\Leftrightarrow^{p_1 = -p_2} \\
&\begin{cases}
    (V + p_2)\frac{2z+\epsilon}{z} \leq (V - p_2)\frac{2z+\epsilon}{z+\epsilon} \\
    (V - p_2)\frac{2y+\epsilon}{y} \leq (V + p_2)\frac{2y+\epsilon}{y+\epsilon}
\end{cases}
\Leftrightarrow \\
&\begin{cases}
    (V + p_2)\cdot (z+\epsilon) \leq (V - p_2)\cdot z \\
    (V - p_2) \cdot (y+\epsilon) \leq (V + p_2)\cdot y
\end{cases}
\Leftrightarrow 
\\
&\begin{cases}
    Vz + zp_2 + V\epsilon + \epsilon p_2 \leq Vz - zp_2 \\
    Vy - yp_2 + V\epsilon - \epsilon p_2 \leq Vy - yp_2
\end{cases}
\Leftrightarrow \\
&\begin{cases}
     2zp_2  + \epsilon p_2 \leq -V\epsilon\\
     2yp_2 + \epsilon p_2\geq V\epsilon
\end{cases} \Leftrightarrow 
\\
&\begin{cases}
     p_2 \leq -\frac{V\epsilon}{2z + \epsilon}\\
    p_2 \geq \frac{V\epsilon}{2y + \epsilon}
\end{cases}
\end{align*}
If $z\geq y$ then $\frac{V\epsilon}{2y + \epsilon} \geq \frac{V\epsilon}{2z + \epsilon}$. Hence $p_2 \leq -\frac{V\epsilon}{2z + \epsilon}$ and $p_2 \geq \frac{V\epsilon}{2z + \epsilon}$. Obviously, as all variables are positive, there is no solution.

If $y\geq z$ then $-\frac{V\epsilon}{2z + \epsilon} \leq -\frac{V\epsilon}{2y + \epsilon}$. Hence $p_2 \leq -\frac{V\epsilon}{2y + \epsilon}$ and $p_2 \geq \frac{V\epsilon}{2y + \epsilon}$. Obviously, as all variables are positive, there is no solution.
\end{proof}

\end{toappendix}
\appendix

\section*{Acknowledgments}
We want to thank Daniel Halpern for a fruitful discussion that helped obtain the results in \Cref{sec: max mean cycle cost},
and Shaul Tzionit for his help in the statistical analysis of the experiment results.
Erel Segal-Halevi is funded by Israel Science Foundation grants no. 712/20 and 1092/24.
Rica Gonen and Noga Klein Elmalem are funded by Environment and Sustainability Research Center grant 7/24.
\newpage
\bibliographystyle{ACM-Reference-Format}
\bibliography{ref}

\newpage
\end{document}